\newtheorem{cor}{Corollary}
\newtheorem{prop}{Proposition}
\newtheorem{lem}{Lemma}
\newtheorem{thm}{Theorem}
\newcommand{\figref}[1]{Figure~{\rm\ref{fig:#1}}}
\newcommand{\tabref}[1]{Table~{\rm\ref{tab:#1}}}
\newcommand{\lemref}[1]{Lemma~{\rm\ref{lem:#1}}}
\newcommand{\lineref}[1]{Line~{\rm\ref{line:#1}}}
\newcommand{\propref}[1]{Proposition~{\rm\ref{prop:#1}}}
\newcommand{\secref}[1]{Section~\ref{sec:#1}}
\newcommand{\thmref}[1]{Theorem~\ref{thm:#1}}
\renewcommand{\algref}[1]{Algorithm~\ref{alg:#1}}
\newcommand{\qed}{\quad $\Box$\medskip}
\newenvironment{proof}{\medskip
\noindent{\scshape Proof:}}{\quad $\Box$\medskip}
\long\def\invis#1{}
\title{An Efficient Local Search for the Minimum Independent Dominating Set Problem}
\author{Kazuya Haraguchi}
\date{Otaru University of Commerce, Midori 3-5-21, Otaru, Hokkaido, Japan\\
  \texttt{haraguchi@res.otaru-uc.ac.jp}}
\begin{document}
\maketitle

\begin{abstract}
In the present paper, 
we propose an efficient local search for the minimum independent dominating set problem. 
We consider a local search that uses {\em $k$-swap\/} as 
the neighborhood operation.
Given a feasible solution $S$, it is the operation 
of obtaining another feasible solution
by dropping exactly $k$ vertices from $S$ 
and then by adding any number of vertices to it.
We show that, when $k=2$,
(resp., $k=3$ and a given solution is minimal with respect to 2-swap),
we can find an improved solution in the neighborhood 
or conclude that no such solution exists in $O(n\Delta)$
(resp., $O(n\Delta^3)$) time,
where $n$ denotes the number of vertices and $\Delta$ denotes the maximum degree. 
We develop a metaheuristic algorithm that repeats
the proposed local search and the plateau search iteratively,
where the plateau search examines solutions of the same size as the current solution
that are obtainable by exchanging a solution vertex and a non-solution vertex. 
The algorithm is so effective that, among 80 DIMACS graphs,
it updates the best-known solution size for five graphs
and performs as well as existing methods for the remaining graphs. 
\end{abstract}

\section{Introduction}
Let $G=(V,E)$ be a graph
such that $V$ is the vertex set and $E$ is the edge set. 
Let $n=|V|$ and $m=|E|$. 
A vertex subset $S$ $(S\subseteq V)$ 
is {\em independent} if no two vertices in $S$ are adjacent,
and {\em dominating} 
if every vertex in $V\setminus S$ is adjacent to at least one vertex in $S$. 
Given a graph,
the {\em minimum independent dominating set\/} ({\em MinIDS}) problem
asks for a smallest vertex subset
that is dominating as well as independent. 
The MinIDS problem has many practical applications
in data communication and networks~\cite{KNMW.2005}.

There is much literature on the MinIDS problem
in the field of discrete mathematics~\cite{GH.2013}. 
%
The problem is NP-hard~\cite{GJ.1979}
and also hard even to approximate;
there is no constant $\varepsilon>0$
such that the problem can be approximated
within a factor of $n^{1-\varepsilon}$ in
polynomial time, unless P$=$NP~\cite{H.1993}.

For algorithmic perspective, 
Liu and Song~\cite{LS.2006}
and Bourgeois et al.~\cite{BCEP.2013}
proposed exact algorithms with polynomial space. 
The running times of Liu and Song's algorithms
are bounded by $O^\ast(2^{0.465n})$ and $O^\ast(2^{0.620n})$,
and the running time of Bourgeois et al.'s 
algorithm is bounded by $O^\ast(2^{0.417n})$,
where $O^\ast(\cdot)$ is introduced to ignore 
polynomial factors. 
Laforest and Phan~\cite{LP.2013} proposed an exact algorithm
based on clique partition,
and made empirical comparison
with one of the Liu and Song's algorithms,
in terms of the computation time. 
Davidson et al.~\cite{DBL.2017}
proposed an integer linear optimization model 
for the weighted version of the MinIDS problem
(i.e., weights are given to edges as well as vertices,
and the weight of an edge $vx$ is counted as cost if the edge $vx$ is used to
assign a non-solution vertex $v$ to a solution vertex $x$;
every non-solution vertex $v$ is automatically assigned to
an adjacent solution vertex $x$ such that
the weight of $vx$ is the minimum)
and performed experimental validation for random graphs. 
Recently,
Wang et al.~\cite{WCSY.2017} proposed
a tabu-search based memetic algorithm
and Wang et al.~\cite{WLZY.2017} 
proposed a metaheuristic algorithm
based on GRASP (greedy randomized adaptive search procedure).
They showed their effectiveness on DIMACS instances,
in comparison with CPLEX12.6 and LocalSolver5.5.

A vertex subset $S$ is an IDS
iff it is a maximal independent set
with respect to set-inclusion~\cite{B.1962}.
Then one can readily see that the MinIDS problem is equivalent to
the maximum minimal vertex cover (MMVC) problem and
the minimum maximal clique problem. 
Zehavi~\cite{Z.2017} studied the MMVC problem,
which has applications to wireless ad hoc networks,
from the viewpoint of fixed-parameter-tractability.

For a combinatorially hard problem like the MinIDS problem,
it is practically meaningful to develop
a heuristic algorithm to obtain a nearly-optimal solution
in reasonable time. 
In the present paper,
we propose an efficient local search for the MinIDS problem. 
By the term ``efficient'',
we mean that the proposed local search
has a better time bound than one na\"ively implemented. 
The local search can serve as a key tool of local improvement
in a metaheuristic algorithm,
or can be used in an initial solution generator
for an exact algorithm. 
We may also expect that it is extended to the weighted version
of the MinIDS problem in the future work.

Our strategy is to search for a smallest maximal independent set. 
Hereafter, we may call a maximal independent set
simply a {\em solution\/}. 
In the proposed local search,
we use {\em $k$-swap\/} for the neighborhood operation.
Given a solution $S$,  
$k$-swap refers to the operation of obtaining another solution
by dropping exactly $k$ vertices from $S$
and then by adding any number of vertices to it. 
The {\em $k$-neighborhood of $S$\/}
is the set of all solutions that can be 
obtained by performing $k$-swap on $S$. 
We call $S$ {\em $k$-minimal}
if its $k$-neighborhood
contains no $S'$ such that $|S'|<|S|$.

\invis{
To achieve the efficiency,
we design the data structure upon the one 
employed in Andrade et al.'s local search (ARW-LS for short)
for the maximum independent set problem~\cite{ARW.2012}. 
ARW-LS seeks a ``larger'' maximal independent set efficiently,
and we make use of the data structure
in order to search for a ``smaller'' maximal independent set. 
}

To speed up the local search,
one should search the neighborhood 
for an improved solution as efficiently as possible. 
For this, we propose
$k$-neighborhood search algorithms for $k=2$ and 3. 
When $k=2$ (resp., $k=3$ and a given solution is 2-minimal),
the algorithm finds an improved solution 
or decides that no such solution exists in $O(n\Delta)$
(resp., $O(n\Delta^3)$) time,
where $\Delta$ denotes the maximum degree in the input graph.

Furthermore, we develop a metaheuristic algorithm
named {\em ILPS} ({\em Iterated Local \& Plateau Search}) 
that repeats the proposed local search
and the plateau search iteratively.  
ILPS is so effective that, among 80 DIMACS graphs,
it updates the best-known solution size for five graphs
and performs as well as existing methods
for the remaining graphs.
\invis{
For all the 80 DIMACS graphs,
ILPS can find solutions of the equal sizes or smaller sizes
than the recent GRASP based method~\cite{WLZY.2017}, CPLEX12.6
and LocalSolver5.5.
}

The paper is organized as follows. 
Making preparations in \secref{prel},
we present $k$-neighborhood search algorithms for $k=2$ and 3
in \secref{ls}
and describe ILPS in \secref{ilps}. 
We show computational results in \secref{comp}
and then give concluding remark in \secref{conc}. 
Some proofs and experimental results
are included in the appendix.
The source code of ILPS is written in C++ and
available at \url{http://puzzle.haraguchi-s.otaru-uc.ac.jp/minids/}.

\section{Preliminaries}
\label{sec:prel}

\subsection{Notation and Terminologies}
For a vertex $v\in V$,
we denote by $\deg(v)$ the degree of $v$, 
and by $N(v)$ the set of neighbors of $v$,
i.e., $N(v)=\{u\mid vu\in E\}$. 
For $S\subseteq V$,
we define $N(S)=(\bigcup_{v\in S}N(v))\setminus S$.
We denote by $G[S]$ the subgraph induced by $S$. 
The $S$ is called a {\em $k$-subset} if $|S|=k$. 

Suppose that $S$ is an independent set.
The {\em tightness of $v\notin S$} is the number of
neighbors of $v$ that belong to $S$, i.e., $|N(v)\cap S|$. 
We call the $v$ {\em $t$-tight} 
if its tightness is $t$. 
In particular, a 0-tight vertex is called {\em free\/}.
We denote by $T_t$ the set of $t$-tight vertices. 
Then $V$ is partitioned into $V=S\cup T_0\cup\dots\cup T_{n-1}$,
where $T_t$ may be empty. 
Let $T_{\ge t}$ denote the set of vertices 
that have the tightness no less than $t$,
that is, $T_{\ge t}=T_{t}\cup T_{t+1}\cup\dots\cup T_{n-1}$. 

An independent set $S$ is a solution (i.e., a maximal independent set) iff $T_0=\emptyset$. 
We call $x\in S$ a {\em solution vertex\/}
and $v\notin S$ a {\em non-solution vertex\/}. 
When a solution vertex $x\in S$ and
a $t$-tight vertex $v\notin S$
are adjacent,
$x$ is a {\em solution neighbor of $v$},
or equivalently, $v$ is a {\em $t$-tight neighbor of $x$\/}. 

A $k$-swap on a solution $S$
is the operation of obtaining another solution $(S\setminus D)\cup A$
such that $D$ is a $k$-subset of $S$
and that $A$ is a non-empty subset of $V\setminus S$. 
We call $D$ a {\em dropped subset} and $A$ an {\em added subset\/}. 
The $k$-neighborhood of $S$
is the set of all solutions obtained by performing a $k$-swap on $S$. 
A solution $S$ is $k$-minimal if the $k$-neighborhood
contains no improved solution $S'$ such that $|S'|<|S|$. 
Note that every solution is 1-minimal.

If a $k$-subset $D$ is dropped from $S$,
then trivially, the $k$ solution vertices in $D$ become free,
and some non-solution vertices may also become free. 
Observe that a non-solution vertex becomes free
if the solution neighbors are completely contained in $D$. 
We denote by $F(D)$ the set of such vertices
and it is 
defined as $F(D)=\{v\in V\setminus S\mid N(v)\cap S\subseteq D\}$. 
Clearly the added subset $A$ should be
a maximal independent set in $G[D\cup F(D)]$.
We have $F(D)\subseteq N(D)$, and
the tightness of any vertex in $F(D)$ 
is at most $k$ (at the time before dropping $D$ from $S$).

\subsection{Data Structure}
\label{sec:prel.ds}
We store the input graph by means of 
the typical adjacency list. 
We maintain a solution based on the data structure
that Andrade et al.~\cite{ARW.2012} invented for
the maximum independent set problem. 
For the current solution $S$,
we have an ordering $\pi:V\rightarrow\{1,\dots,n\}$ 
on all vertices in $V$ such that;
\begin{itemize}
  \item $\pi(x)<\pi(v)$ whenever $x\in S$ and $v\notin S$;
  \item $\pi(v)<\pi(v')$ whenever $v\in T_0$ and $v'\in T_{\ge 1}$;
  \item $\pi(v')<\pi(v'')$ whenever $v'\in T_{1}$ and $v''\in T_{\ge 2}$;
  \item $\pi(v'')<\pi(v''')$ whenever $v''\in T_2$ and $v'''\in T_{\ge 3}$.
\end{itemize}
Note that the ordering is partitioned into five sections;
$S$, $T_0$, $T_1$, $T_2$ and $T_{\ge3}$. 
In each section, the vertices are arranged arbitrarily. 
We also maintain the number of vertices in each section
and the tightness $\tau(v)$ 
for every non-solution vertex $v\notin S$. 

Let us describe the time complexities
of some elementary operations. 
We can scan each vertex section in linear time. 
We can pick up a free vertex (if exists) in $O(1)$ time. 
We can drop (resp., add) a vertex $v$ from (resp., to)
the solution in $O(\deg(v))$ time.  
See \cite{ARW.2012} for details. 

Before closing this preparatory section,
we mention the time complexities of
two essential operations.

\begin{prop}
  \label{prop:listFD}
  Let $D$ be a $k$-subset of $S$. 
  We can list all vertices in $F(D)$
  in $O(k\Delta)$ time. 
\end{prop}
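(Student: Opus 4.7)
The plan is to exploit the observation that $F(D) \subseteq N(D)$ together with the stored tightness values $\tau(v)$ maintained in the data structure. A vertex $v \notin S$ lies in $F(D)$ precisely when every solution neighbor of $v$ is contained in the $k$-subset $D$, i.e., $|N(v)\cap D| = |N(v)\cap S| = \tau(v)$. So it suffices to compute $|N(v)\cap D|$ for each $v \in N(D)$ and compare with $\tau(v)$.

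Concretely, I maintain a global integer array $c[\cdot]$ of size $n$, initialized to zero once and for all, plus a list $L$ of vertices currently touched. The procedure is: for each $x \in D$, traverse the adjacency list of $x$, and for each neighbor $v$ (necessarily non-solution because $S$ is independent), increment $c[v]$, appending $v$ to $L$ the first time it is touched. After scanning all of $D$, walk through $L$ once and output every $v$ with $c[v] = \tau(v)$; finally, reset $c[v] \leftarrow 0$ for each $v \in L$ so that the array is clean for subsequent calls. Correctness follows because $c[v]$ equals exactly $|N(v)\cap D|$ after the scan, and $\tau(v)$ is maintained by the data structure in $O(1)$ lookup time.

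For the time bound, the neighbor scans cost $\sum_{x \in D} \deg(x) \le k\Delta$. The list $L$ has length at most this same quantity, so the output loop and the reset loop are each $O(k\Delta)$. Retrieving $\tau(v)$ is constant time. Hence the total running time is $O(k\Delta)$.

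The only genuine subtlety is the initialization of $c[\cdot]$: a naive $O(n)$ reset would ruin the bound. This is handled by the standard trick of remembering the touched indices in $L$ and resetting only those entries, which keeps the whole procedure within the desired complexity. I do not foresee a significant obstacle beyond this bookkeeping point; the key substantive step is recognizing that the stored tightness $\tau(v)$ lets us test the condition $N(v)\cap S \subseteq D$ in constant time without inspecting the solution neighbors of $v$ individually.
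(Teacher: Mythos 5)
Your proof is correct and follows essentially the same approach as the paper: count $|N(v)\cap D|$ for each $v\in N(D)$ by scanning the adjacency lists of the vertices in $D$ and compare the count with the stored tightness $\tau(v)$. The only difference is cosmetic --- the paper zeroes the counters with a first pass over $N(D)$, while you keep a globally clean array and reset only the touched entries afterwards; both yield the same $O(k\Delta)$ bound.
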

\begin{proof}
  We let every $v\in V$ have an integral counter, which we denote by $c(v)$. 
  It suffices to scan vertices in $N(D)$ twice. 
  In the first scan, we initialize the counter value as $c(u)\gets 0$
  for every neighbor $u\in N(x)$ of every solution vertex $x\in D$. 
  In the second, we increase the counter of $u$
  by one (i.e., $c(u)\gets c(u)+1$)
  when $u$ is searched in the adjacency list of $x\in D$.
  Then, if $c(u)=\tau(u)$ holds,
  we output $u$ as a member of $F(D)$ 
  since the equality represents that 
  every solution neighbor of $u$ is contained in $D$. 
  Obviously the time bound is $O(k\Delta)$. 
\end{proof}

\begin{prop}
  \label{prop:adj}
  Let $D$ be a $k$-subset of $S$. 
  For any non-solution vertex $v\in F(D)$,
  we can decide whether $v$ is adjacent to all vertices in $F(D)\setminus\{v\}$
  in $O(k\Delta)$ time. 
\end{prop}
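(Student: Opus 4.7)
The plan is to first obtain the listing of $F(D)$ via \propref{listFD} in $O(k\Delta)$ time, and then test whether every other vertex of $F(D)$ belongs to $N(v)$. A na\"ive per-member adjacency test would cost $\Omega(\Delta)$ per query and hence $\Omega(k\Delta^2)$ in the worst case, because $|F(D)|$ can be as large as $|N(D)|=\Theta(k\Delta)$. The rescue is a simple degree observation: if $v$ is adjacent to every vertex in $F(D)\setminus\{v\}$, then $|F(D)|-1\le\deg(v)\le\Delta$. Hence, as soon as the listing step reports $|F(D)|>\Delta+1$, I would return ``no'' immediately; the total cost in that branch is just the $O(k\Delta)$ of the listing.

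In the remaining case $|F(D)|\le\Delta+1$, I would use a boolean ``mark'' flag on each vertex of $V$ (a standard auxiliary array alongside the data structure of \secref{prel.ds}): traverse the adjacency list of $v$ once to mark every element of $N(v)$ in $O(\deg(v))=O(\Delta)$ time, then sweep through the list of $F(D)\setminus\{v\}$ produced earlier and check that each entry carries the mark. If some entry is unmarked the answer is ``no'', otherwise ``yes''. A final $O(\Delta)$ pass unmarks $N(v)$ so that the flags are ready for the next invocation. Since $|F(D)|=O(\Delta)$ in this branch, the whole check is $O(\Delta)$, and the overall bound is $O(k\Delta)+O(\Delta)=O(k\Delta)$.

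The main obstacle to directly attaining the $O(k\Delta)$ bound is precisely the a priori size of $F(D)$: it can be $\Theta(k\Delta)$, which is too large to afford even one $O(\Delta)$ adjacency-list scan per element. The argument sidesteps this by exploiting the structure of the question itself; an affirmative answer forces $F(D)$ into the small regime $|F(D)|\le\Delta+1$, after which the remaining bookkeeping reduces to a mark-and-scan routine that fits comfortably within the budget.
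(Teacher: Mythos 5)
Your proof is correct and follows essentially the same strategy as the paper's: run the listing algorithm of \propref{listFD} and then decide the question with a single sweep using $O(1)$-time membership tests via an auxiliary per-vertex array. The only (immaterial) differences are the direction of the final sweep --- the paper marks $F(D)$ via the counters $c(\cdot)$ and scans $N(v)$ counting members of $F(D)$, whereas you mark $N(v)$ and scan $F(D)\setminus\{v\}$ --- and your early exit when $|F(D)|>\Delta+1$, which is sound but not needed since the sweep is $O(1)$ per element either way.
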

\begin{proof}
  We use the algorithm of \propref{listFD}.
  As preprocessing of the algorithm,
  we set the counter $c(u)$ of each $u\in N(v)$ to 0, i.e., $c(u)\gets0$,
  which can be done in $O(\deg(v))$ time. 
  After we acquire $F(D)$ by running the algorithm of \propref{listFD},
  we can see if $v$ is adjacent to all other vertices in $F(D)$
  in $O(\deg(v))$ time
  by counting the number of vertices $u\in N(v)$
  such that $\tau(u)\in\{1,\dots,k\}$ and $c(u)=\tau(u)$.
  If the number equals to (resp., does not equal to) $|F(D)|-1$,
  then we can conclude that it is true (resp., false). 
\end{proof}

\section{Local Search}
\label{sec:ls}

Assume that, for some $k\ge2$, a given solution $S$ is $k'$-minimal 
for every $k'\in\{1,\dots,k-1\}$. 
Such $k$ always exists, e.g., $k=2$. 
In this section, we consider how we find an improved solution
in the $k$-neighborhood of $S$
or conclude that $S$ is $k$-minimal efficiently. 

Let us describe how time-consuming na\"ive implementation is. 
In na\"ive implementation,
we search all $k$-subsets of $S$ as candidates of the dropped subset $D$,
where the number of them is $O(n^k)$.
Furthermore, for each $D$, 
there are $O(n^{k-1})$ candidates of the added subset $A$. 
The number of possible pairs $(D,A)$
is up to $O(n^{2k-1})$. 

In the proposed neighborhood search algorithm, 
we do not search dropped subsets but added subsets;
we generate a dropped subset from each added subset. 
When $k\in\{2,3\}$,
the added subsets can be searched more efficiently
than the dropped subsets.
This search strategy stems from \propref{ls.general}, 
a necessary condition of a $k$-subset $D$
that the improvement is possible by
a $k$-swap that drops $D$. 
We introduce the condition in \secref{ls.cond}. 

%
%
%
%
Then in \secref{ls.2} (resp., \ref{sec:ls.3}),
we present a $k$-neighborhood search algorithm
that finds an improved solution
or decides that no such solution exists
for $k=2$ (resp., 3),
which runs in $O(n\Delta)$
(resp., $O(n\Delta^3)$) time.

\subsection{A Necessary Condition for Improvement}
\label{sec:ls.cond}

Let $D$ be a $k$-subset of $S$. 
If there is a subset $A\subseteq F(D)$
such that $A$ is maximal independent in $G[D\cup F(D)]$
and $|A|<|D|$, then we have an improved solution $(S\setminus D)\cup A$. 
The connectivity of $G[D\cup F(D)]$
is necessary for the existence of such $A$, 
as stated in the following proposition. 

\begin{prop}
\label{prop:ls.general}
Suppose that a solution $S$ is $k'$-minimal
for every $k'\in\{1,\dots,k-1\}$ for some integer $k\ge2$.
Let $D$ be a $k$-subset of $S$. 
There is a maximal independent set $A$ 
in $G[D\cup F(D)]$ such that $A\subseteq F(D)$ and $|A|<|D|$
only when the subgraph is connected.  
\end{prop}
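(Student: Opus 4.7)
The plan is to argue by contrapositive: assume $G[D\cup F(D)]$ is disconnected and show that every maximal independent set $A\subseteq F(D)$ in $G[D\cup F(D)]$ must satisfy $|A|\ge |D|$. Let $C_1,\dots,C_r$ (with $r\ge 2$) be the connected components of $G[D\cup F(D)]$, and set $D_i=D\cap C_i$ and $F_i=F(D)\cap C_i$. Note first that each $C_i$ must meet $D$: any $v\in F_i$ has all its solution neighbors in $D$ and those neighbors lie in $C_i$; meanwhile, if $C_i$ consisted of an isolated vertex of $D$ alone, maximality of $A$ inside $G[D\cup F(D)]$ together with $A\subseteq F(D)$ would be violated. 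Hence $1\le |D_i|\le k-1$ for every $i$.

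Next I would suppose for contradiction that some maximal independent $A\subseteq F(D)$ satisfies $|A|<|D|$. Since $|A|=\sum_i |A\cap C_i|$ and $|D|=\sum_i |D_i|$, there is an index $i$ with $|A\cap C_i|<|D_i|$. Put $D'=D_i$ and $A'=A\cap C_i$; then $|D'|\le k-1$. The plan is to show $S' := (S\setminus D')\cup A'$ is a solution, whence $|S'|<|S|$ is obtained by a $|D'|$-swap, contradicting the assumed $|D'|$-minimality of $S$ since $|D'|\in\{1,\dots,k-1\}$.

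The main obstacle, and the only place where the disconnection hypothesis is actually used, is verifying that $S'$ is both independent and maximal. Independence follows because any $a\in A'\subseteq F(D)$ has $N(a)\cap S\subseteq D$, and any such $S$-neighbor of $a$ lies in $C_i$ (same component as $a$ in $G[D\cup F(D)]$), hence in $D_i=D'$; so $a$ has no neighbor in $S\setminus D'$. For maximality I would consider $v\notin S'$. If $v\in D'$, then $v\in C_i$ and $v\notin A$, so by maximality of $A$ in $G[D\cup F(D)]$ it has a neighbor in $A\cap C_i=A'$. If $v\notin S$ and $v$ already has a solution neighbor outside $D'$, we are done; otherwise $N(v)\cap S\subseteq D'\subseteq D$, which puts $v\in F(D)$ and, through its solution neighbors in $D'\subseteq C_i$, into the component $C_i$. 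Thus $v\in F_i\setminus A$, and maximality of $A$ inside $G[D\cup F(D)]$ once more supplies a neighbor in $A\cap C_i=A'$.

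Once these two verifications are in place, $S'$ is a legitimate solution reached from $S$ by a $k'$-swap with $k'=|D'|\le k-1$, and $|S'|<|S|$ contradicts the assumed $k'$-minimality of $S$. I expect the bookkeeping in the maximality check — specifically, arguing that a vertex losing all its solution neighbors must fall into the same component $C_i$ — to be the subtlest step, but it follows straightforwardly from the definition of $F(D)$ and the connectivity structure of $G[D\cup F(D)]$.
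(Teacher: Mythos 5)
Your proof is correct and follows essentially the same route as the paper's: decompose $G[D\cup F(D)]$ into connected components, observe that each component must meet $D$ so that $1\le|D_i|\le k-1$, and apply a pigeonhole argument to extract a component on which the restricted swap contradicts the $k'$-minimality of $S$ for some $k'<k$. You additionally verify in detail that the restricted swap $(S\setminus D')\cup A'$ is indeed an independent dominating set, a step the paper leaves implicit.
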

\begin{proof}
Suppose that $G[D\cup F(D)]$ is not connected. 
Let $q$ be the number of connected components
and $D^{(p)}\cup F^{(p)}(D)$ be the subset of vertices in the $p$-th component
($q\ge2$, $p=1,\dots,q$, $D^{(p)}\subseteq D$, $F^{(p)}(D)\subseteq F(D)$). 
Each $D^{(p)}$ is not empty 
since otherwise 
there would be an isolated vertex in $F^{(p)}(D)$.
It is a free vertex with respect to $S$, 
which contradicts that $S$ is a solution.
Then we have $1\le|D^{(p)}|<k$. 

The maximal independent set $A$ is a subset of $F(D)$. 
We partition $A$ into $A=A^{(1)}\cup\dots\cup A^{(q)}$, where $A^{(p)}=A\cap F^{(p)}(D)$. 
Each $A^{(p)}$ is maximal independent for the $p$-th component.  
As $|A|<|D|$, $|A^{(p)}|<|D^{(p)}|$ holds for some $p$.  
Then we can construct an improved solution $(S\setminus D^{(p)})\cup A^{(p)}$,
which contradicts the $k'$-minimality of $S$. 
\end{proof}

\subsection{2-Neighborhood Search}
\label{sec:ls.2}

Applying \propref{ls.general} to the case of $k=2$,
we have the following proposition. 


\begin{prop}
\label{prop:ls.2}
Let $D$ be a 2-subset of $S$.
There is a non-solution vertex $v$ in $F(D)$
such that $(S\setminus D)\cup\{v\}$ is a solution
only when there is a 2-tight vertex in $F(D)$. 
\end{prop}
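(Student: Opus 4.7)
The plan is to specialize Proposition~\ref{prop:ls.general} to $k=2$ and then use the constraint that the added subset has size exactly one.

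First I would set $D=\{x_1,x_2\}$ and observe that if $v\in F(D)$ is such that $(S\setminus D)\cup\{v\}$ is a maximal independent set, then $A=\{v\}$ is a maximal independent set in $G[D\cup F(D)]$ contained in $F(D)$ with $|A|=1<2=|D|$. By Proposition~\ref{prop:ls.general}, $G[D\cup F(D)]$ must therefore be connected.

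Next I would derive the 2-tightness of $v$ by a direct adjacency argument. Since $\{v\}$ is a maximal independent set of $G[D\cup F(D)]$, the vertex $v$ must be adjacent to every other vertex of this subgraph, and in particular to both $x_1$ and $x_2$. (One can also see this without invoking the proposition: if $v$ were not adjacent to some $x_i\in D$, then after the swap $x_i$ would lose its only potential dominators — its neighbors in $S$ were confined to $D$, since $v\in F(D)\subseteq N(D)$ and $S$ is independent means $x_i$ has no neighbors in $S\setminus D$ either — so $x_i$ would be free, contradicting maximality.) Because $v\in F(D)$ means all solution neighbors of $v$ lie in $D$, and now we know both vertices of $D$ are solution neighbors of $v$, the tightness $|N(v)\cap S|$ equals exactly $2$, giving the desired conclusion.

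There is essentially no obstacle here; the only thing to be careful about is distinguishing the two places where maximality is used (maximality of $S$ in $G$ vs.\ maximality of $A=\{v\}$ inside $G[D\cup F(D)]$) and ensuring that the tightness bound $|N(v)\cap S|\le k=2$ guaranteed for members of $F(D)$ is achieved with equality because of the domination requirement on $x_1$ and $x_2$.
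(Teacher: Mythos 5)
Your proof is correct and takes essentially the same route as the paper: the paper obtains the proposition as the $k=2$ case of \propref{ls.general}, and then, in the remark immediately following it, gives exactly your direct argument --- maximality of $\{v\}$ in $G[D\cup F(D)]$ forces $v$ to be adjacent to both vertices of $D$, hence 2-tight. Your write-up in fact establishes that stronger refinement directly (and correctly observes that the connectivity appeal is then dispensable), with the tightness pinned to exactly $2$ by combining adjacency to all of $D$ with $N(v)\cap S\subseteq D$.
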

\invis{
\begin{proof}
  Every vertex in $F(D)$ is either 1-tight or 2-tight. 
  If all vertices in $F(D)$ are 1-tight, then 
  the subgraph $G[D\cup F(D)]$ would not be connected 
  since every vertex in $F(D)$ is connected to exactly one vertex in $D$.
\end{proof}
}

We can say more on \propref{ls.2}.
The vertex $v$ should be 2-tight
since, if not so (i.e., $v$ is 1-tight),
$\{v\}$ would not be maximal independent for $G[D\cup F(D)]$;
$v$ is adjacent to only one of $D=\{x,y\}$
from the definition of 1-tightness. 

In summary, if there is an improved solution $(S\setminus D)\cup\{v\}$,
then $v$ is 2-tight and has $x$ and $y$ as the solution neighbors. 
Instead of searching all 2-subsets of $S$,
we scan all 2-tight vertices,
and for each 2-tight vertex $v$,
we take $D=\{x,y\}$ as the candidate of the dropped set. 
We have the following theorem. 

\invis{
\begin{cor}
  \label{cor:ls.2}
  There are $O(n)$ 2-subsets $D$
  for which there exists $v\in F(D)$
  such that $(S\setminus D)\cup\{v\}$
  is a solution. 
\end{cor}
}

\begin{thm}
\label{thm:ls.2}
Given a solution $S$,
we can find an improved solution in the 2-neighborhood
or conclude that $S$ is 2-minimal
in $O(n\Delta)$ time. 
\end{thm}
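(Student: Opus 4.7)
The plan is to exploit Proposition~\ref{prop:ls.2} together with the refinement stated immediately after it: any improving 2-swap on $S$ has the form $(S \setminus \{x,y\}) \cup \{v\}$, where $v$ is a 2-tight vertex whose two solution neighbors are exactly $x$ and $y$. This is what lets the search over added vertices be short, even though $S$ has $\Theta(|S|^2)$ candidate 2-subsets.

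The algorithm I have in mind scans all 2-tight vertices $v \in T_2$. Thanks to the data structure of Section~\ref{sec:prel.ds}, the $T_2$ section can be enumerated in $O(|T_2|) \le O(n)$ time. For each $v$, I walk $N(v)$ to identify the two solution neighbors $\{x,y\} = N(v) \cap S$ in $O(\deg(v)) = O(\Delta)$ time, set $D = \{x,y\}$, and test whether $(S \setminus D) \cup \{v\}$ is a solution.

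The test reduces to checking whether every vertex in $F(D) \setminus \{v\}$ is adjacent to $v$. Indeed, $(S \setminus D) \cup \{v\}$ is independent because the only solution neighbors of $v$ are $x$ and $y$; for maximality, the dropped vertices $x, y$ are dominated by $v$ automatically, vertices outside $N(D)$ or in $N(D) \setminus F(D)$ still have a solution neighbor in $S \setminus D$, and only the vertices in $F(D) \setminus \{v\}$ require $v$ to dominate them. By Proposition~\ref{prop:listFD} one lists $F(D)$ in $O(\Delta)$ time, and by Proposition~\ref{prop:adj} one decides the adjacency condition for $v$ in $O(\Delta)$ time, so the work per $v$ is $O(\Delta)$ and the total is $O(n\Delta)$.

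If some $v$ passes the test, the resulting set is the sought improved solution; otherwise I conclude that $S$ is 2-minimal. The point to verify carefully is that no improving 2-swap escapes this scan. An improving 2-swap must have added subset $A$ with $|A| \le 1$, and $|A| = 0$ is impossible because $x, y$, having no neighbors in $S$, would then be undominated by $S \setminus D$. Hence $|A| = 1$, and the refinement of Proposition~\ref{prop:ls.2} forces the sole added vertex to be 2-tight with solution neighbors exactly $\{x,y\}$, placing it among the candidates already considered. This correctness argument (and the bookkeeping needed to recover $\{x,y\}$ from $N(v)$) is the only subtle part; the complexity bound falls out directly from Propositions~\ref{prop:listFD} and~\ref{prop:adj}.
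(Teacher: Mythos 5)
Your proposal is correct and follows essentially the same route as the paper: scan $T_2$, recover $D=\{x,y\}=N(v)\cap S$ for each 2-tight $v$, and decide in $O(\Delta)$ time via Propositions~\ref{prop:listFD} and~\ref{prop:adj} whether $v$ dominates all of $F(D)\setminus\{v\}$, giving $O(|T_2|\Delta)=O(n\Delta)$ overall. The completeness argument you spell out (that $|A|=1$ and the added vertex must be 2-tight with solution neighbors exactly $\{x,y\}$) is exactly the discussion the paper gives just before the theorem via Proposition~\ref{prop:ls.2}.
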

\begin{proof}
Since we maintain the solution by means of the vertex ordering,
we can scan all the 2-tight vertices in $O(|T_2|)$ time. 
For each 2-tight $v$, we can detect the two solution neighbors,
say $x$ and $y$, in $O(\deg(v))$ time. 

Let $D=\{x,y\}$. The singleton $\{v\}$ is maximal independent for $G[D\cup F(D)]$
and thus we have an improved solution $(S\setminus D)\cup\{v\}$
iff $v$ is adjacent to all other vertices in $F(D)$. 
Whether $v$ is adjacent to all other vertices in $F(D)$ is decided in $O(\Delta)$ time, as we stated in \propref{adj}.
\invis{
Recall the $O(\Delta)$-time algorithm in \propref{listFD}
that enumerates all vertices of $F(D)$. 
As a preprocessing of the algorithm,
we set the counter $c(u)$ of each $u\in N(v)$ to 0, i.e., $c(u)\gets0$,
which can be done in $O(\deg(v))$ time. 
After we run the algorithm of \propref{listFD},
we can see if $v$ is adjacent to all other vertices in $F(D)$ in $O(\deg(v))$ time
by deciding whether or not 
the number of vertices $u\in N(v)$
such that $\tau(u)\in\{1,2\}$ and $c(u)=\tau(u)$
equals to $|F(D)|-1$.
}
If it is the case,
then we can construct an improved solution $(S\setminus D)\cup\{v\}$
in $O(\deg(x)+\deg(y)+\deg(v))=O(\Delta)$ time
as the vertex ordering takes $O(\deg(x))$ time to drop $x$ from $S$
and $O(\deg(v))$ time to add $v$ to it~\cite{ARW.2012}. 
Otherwise, we can conclude that
$(S\setminus D)\cup\{v\}$
is not a solution because some vertices in $F(D)$ are not dominated. 

We have seen that, for each 2-tight vertex $v$, 
it takes $O(\Delta)$ time to find an improved solution $(S\setminus D)\cup\{v\}$
or to conclude that it is not a solution.
Therefore, the overall running time is bounded by $O(|T_2|\Delta)=O(n\Delta)$. 
\end{proof}

\subsection{3-Neighborhood Search}
\label{sec:ls.3}

We have the following proposition 
by applying \propref{ls.general} to the case of $k=3$. 

\begin{prop}
\label{prop:ls.3}
Suppose that $S$ is a 2-minimal solution and that $D=\{x,y,z\}$ is a
$3$-subset of $S$. 
There is a subset $A$ of $F(D)$ such that 
$A$ is maximal independent 
in $G[D\cup F(D)]$ and $|A|<|D|$
only when either of the followings holds:
\begin{description}
\item[(a)] there is a 3-tight vertex in $F(D)$
  that has $x$, $y$ and $z$ as the solution neighbors;
\item[(b)] there are two 2-tight vertices in $F(D)$
  such that one has $x$ and $y$ as the solution neighbors
  and the other has $x$ and $z$ as the solution neighbors.
\end{description} 
\end{prop}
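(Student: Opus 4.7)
The plan is to combine \propref{ls.general} with a case analysis on $|A|$ and on the tightnesses of the vertices in $A$. Since the added set is non-empty and $|A|<|D|=3$, we have $|A|\in\{1,2\}$, and \propref{ls.general} (applicable because $S$ is 1-minimal and 2-minimal) already guarantees that $G[D\cup F(D)]$ is connected; this will be the backbone of the structural analysis.

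For $|A|=1$, write $A=\{v\}$. The maximality of $\{v\}$ in $G[D\cup F(D)]$ forces $v$ to be adjacent to every other vertex of this subgraph, and in particular to all three of $x,y,z$. Since $v\in F(D)$ means $N(v)\cap S\subseteq D$, the vertex $v$ must be exactly 3-tight with solution neighbors $\{x,y,z\}$, which is case~(a).

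For $|A|=2$, write $A=\{v_1,v_2\}$ with $v_1,v_2$ non-adjacent and both in $F(D)$. Maximality of $A$ forces every vertex of $D$ to be adjacent to $v_1$ or $v_2$, so $N(v_1)\cup N(v_2)\supseteq D$. Letting $t_i:=|N(v_i)\cap D|$, which coincides with the tightness of $v_i$ because $v_i\in F(D)$, we have $1\le t_i\le 3$ and $t_1+t_2\ge 3$. I would split into sub-cases: if some $t_i=3$ we are in case~(a); if $t_1=t_2=2$ then inclusion-exclusion gives $|N(v_1)\cap N(v_2)\cap D|=1$, meaning the two 2-tight vertices share exactly one vertex of $D$, which after relabeling is case~(b).

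The main obstacle is the remaining sub-case $\{t_1,t_2\}=\{1,2\}$, say $N(v_1)\cap S=\{x\}$ and $N(v_2)\cap S=\{y,z\}$, which must be eliminated using the 2-minimality of $S$. The natural attack is to apply 2-minimality to the 2-subset $D'=\{y,z\}$: $v_2$ is a 2-tight vertex in $F(D')$ with $N(v_2)\cap S=D'$, so if $\{v_2\}$ were maximal independent in $G[D'\cup F(D')]$ we would obtain a 2-swap improvement, contradicting 2-minimality. Hence some $w\in F(D')\setminus\{v_2\}$ is non-adjacent to $v_2$; since $w\in F(D)\setminus A$ (note $w\neq v_1$ because $x\notin D'$) and $A$ is maximal in $G[D\cup F(D)]$, this $w$ must be adjacent to $v_1$. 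The finishing move is expected to chain these facts further --- analysing the tightness of $w$ and invoking 2-minimality on $\{x,y\}$ or $\{x,z\}$ --- to force a smaller improvement and produce the contradiction. This last step is the most delicate part of the proof and is where careful bookkeeping is required.
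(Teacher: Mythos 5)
Your treatment of $|A|=1$ and of the sub-cases ``some $t_i=3$'' and ``$t_1=t_2=2$'' is correct and coincides with the paper's situations (i)--(iii). The genuine gap is the remaining sub-case $\{t_1,t_2\}=\{1,2\}$ (the paper's situation (iv)), which you leave as a sketch ending in ``the finishing move is expected to chain these facts further.'' Note also that your goal there differs from the paper's: you try to show the configuration contradicts 2-minimality, whereas the paper keeps situation (iv) alive and instead asserts (see \figref{N3} and \lemref{ls.3-4}) that it forces the auxiliary 2-tight vertex demanded by condition (b). Your chain stalls exactly where you stop it: 2-minimality applied to $\{y,z\}$ only yields a vertex $w$ with $N(w)\cap S\subseteq\{y,z\}$ that is non-adjacent to $v_2$ and hence adjacent to $v_1$; since $w$'s solution neighbours lie in $\{y,z\}$ and nothing ties $w$ to $x$, further appeals to 2-minimality of $\{x,y\}$ or $\{x,z\}$ produce no contradiction.

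The missing step cannot be supplied, because the sub-case survives 2-minimality. Take $V=\{x,y,z,a,b,w\}$ with edge set $\{xa,\,ya,\,xw,\,zb,\,bw\}$ and $S=\{x,y,z\}$. One checks directly that $S$ is a 2-minimal solution (for each 2-subset $D'$ of $S$, no single vertex of $F(D')$ dominates $D'\cup F(D')$), yet for $D=\{x,y,z\}$ the set $A=\{a,b\}$ is maximal independent in $G[D\cup F(D)]$ with $|A|=2<3$, where $a$ is 2-tight with neighbours $\{x,y\}$ and $b$ is 1-tight with neighbour $z$. Here $F(D)=\{a,b,w\}$ contains no 3-tight vertex and only one 2-tight vertex, so neither (a) nor (b) holds. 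Consequently no bookkeeping --- yours or the paper's --- can derive (a) or (b) from the $\{1,2\}$ sub-case; as stated, the implication fails on it, and a correct version would have to add a third alternative covering a 2-tight/1-tight pair (or strengthen the hypothesis) before any proof can close this case.
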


Let us make observation on the added subset. 
Suppose that, for an arbitrary 3-subset $D\subseteq S$,
there is $A\subseteq F(D)$ such that 
$A$ is maximal independent in $G[D\cup F(D)]$ 
and $|A|<|D|$.
When $|A|=1$, the only vertex in $A$ is 3-tight
since otherwise some vertex in $D$ would not be dominated. 
When $|A|=2$, at least one of the two vertices in $A$
is either 2-tight or 3-tight;
if both are 1-tight, one vertex of $D$ would not be dominated.
Concerning the tightness,
the following four situations are possible:
\begin{description}
\item[(i)] $A=\{a\}$ and $a$ is 3-tight;
\item[(ii)] $A=\{a,b\}$, $a$ is 3-tight, and
  $b$ is $t$-tight such that $t\in\{1,2,3\}$;
\item[(iii)] $A=\{a,b\}$, $a$ is 2-tight, and
  $b$ is 2-tight;
\item[(iv)] $A=\{a,b\}$, $a$ is 2-tight, and
  $b$ is 1-tight.
\end{description}
From (ii) to (iv), the vertices $a$ and $b$ are not adjacent.  
We illustrate (i) to (iv) in \figref{N3}. 

\begin{figure}[t!]
  \centering
  \begin{tabular}{cccc}
    \includegraphics[width=3cm,keepaspectratio]{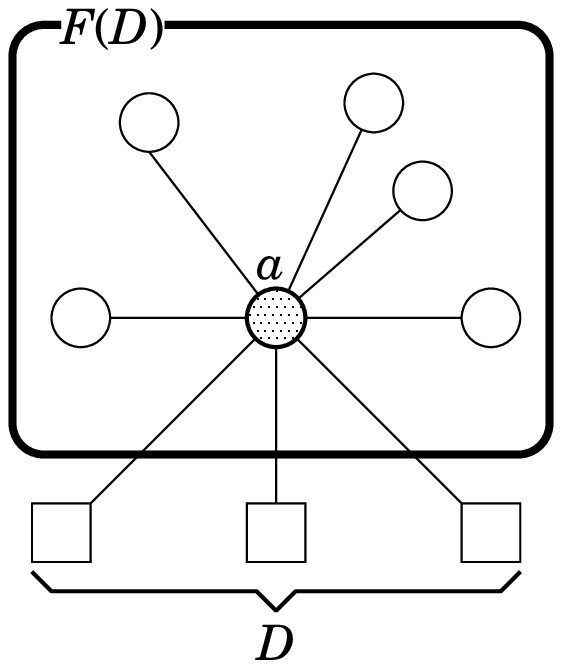} &
    \includegraphics[width=3cm,keepaspectratio]{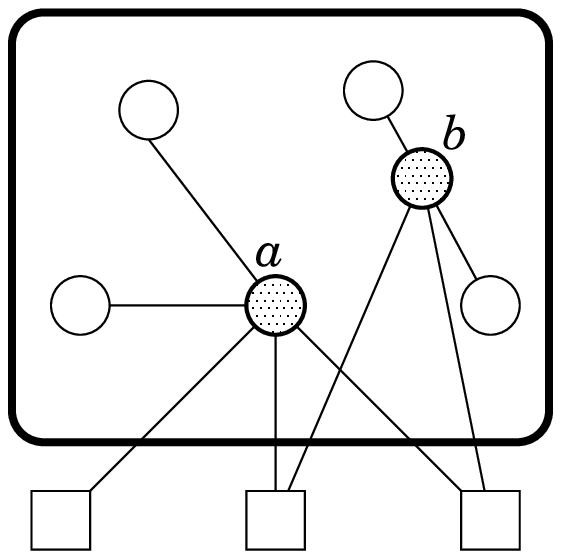} &
    \includegraphics[width=3cm,keepaspectratio]{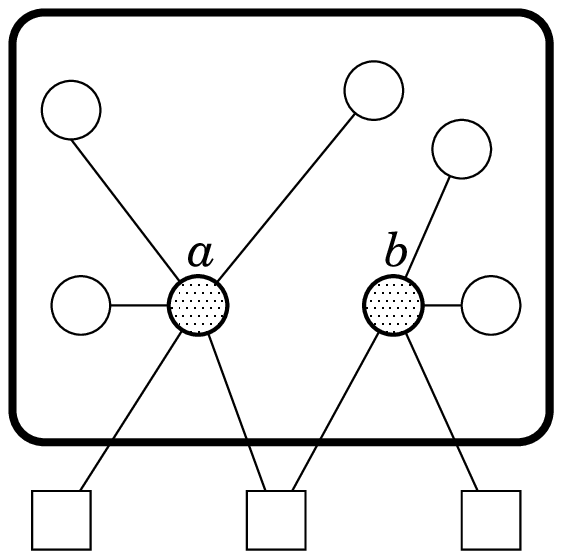} &
    \includegraphics[width=3cm,keepaspectratio]{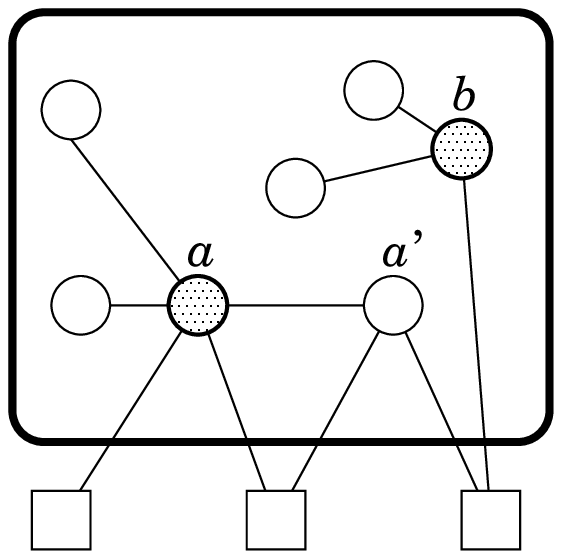} \\
    {\bf(i)} $A=\{a\}$ & {\bf(ii)} $A=\{a,b\}$ &
    {\bf(iii)} $A=\{a,b\}$ & {\bf(iv)} $A=\{a,b\}$
  \end{tabular}
\invis{
  \caption{Illustration of a dropped set $D$ 
    and an added set $A$ for (i) to (iv) in \secref{ls.3}:
    For clarity of the figure,
    we draw only edges that are incident to the vertices $a$ and $b$. 
    Note that every vertex in $F(D)$ is adjacent to at least one vertex in $D$.}
}
  \caption{Illustration of a dropped set $D$ 
   and an added set $A$ for (i) to (iv) in \secref{ls.3}:
    For clarity of the figure,
    we draw only edges that are incident to the vertices $a$, $b$ and $a'$,
    where the vertex $a'$ appears in \lemref{ls.3-4} in the appendix. 
    Note that every vertex in $F(D)$ is adjacent to at least one vertex in $D$.}
    \label{fig:N3}
\end{figure}

Based on the above, we summarize the search strategy as follows. 
In order to generate all 3-subsets $D$ of $S$ such that
$F(D)$ satisfies either (a) or (b) of \propref{ls.3},
we scan all 3-tight vertices $u$ (\propref{ls.3}~(a))
and all pairs of 2-tight vertices, say $v$ and $w$,
such that $|(N(v)\cup N(w))\cap S|=3$ (\propref{ls.3}~(b)). 
For (a), we take $D=N(u)\cap S$
and search $F(D)$ for a 1- or 2-subset $A$ that is maximal independent in $G[D\cup F(D)]$,
regarding the 3-tight vertex $u$ 
as the vertex $a$ in (i) and (ii). 
Similarly, for (b), we take $D=(N(v)\cup N(w))\cap S$
and search $F(D)$ for a 2-subset $A$ that is maximal independent in $G[D\cup F(D)]$,
regarding the 2-tight vertex $v$
as the vertex $a$ in (iii) and (iv). 

We have the following theorem on the time complexity of 3-neighborhood search.
The proof is included in the appendix. 

\begin{thm}
\label{thm:ls.3}
Given a 2-minimal solution $S$,
we can find an improved solution in the 3-neighborhood
or conclude that $S$ is 3-minimal
in $O(n\Delta^3)$ time. 
\end{thm}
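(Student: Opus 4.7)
The plan is to execute the two-phase strategy described just before the theorem: enumerate every candidate dropped subset $D$ allowed by \propref{ls.3}, and for each $D$ decide whether some $A\subseteq F(D)$ with $|A|<3$ is maximal independent in $G[D\cup F(D)]$. The time bound will follow from bounding both the number of candidates and the per-candidate work.

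For the candidate count I would treat the two types (a) and (b) of \propref{ls.3} separately. In type (a), I scan each 3-tight vertex $u$ and set $D = N(u)\cap S$, yielding at most $|T_3|=O(n)$ candidates, each produced in $O(\Delta)$ time. In type (b), for every 2-tight vertex $v$ with $N(v)\cap S = \{x_1,x_2\}$ I scan the 2-tight neighbors of $x_1$ and of $x_2$ and keep each $w$ with $|(N(v)\cup N(w))\cap S| = 3$; since $x_1$ and $x_2$ together have $O(\Delta)$ neighbors, this yields $O(\Delta)$ pairs per $v$ and therefore $O(|T_2|\Delta) = O(n\Delta)$ triples $D = (N(v)\cup N(w))\cap S$ overall.

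For each candidate $D$ I would first compute $F(D)$ in $O(\Delta)$ time via \propref{listFD}, noting $|F(D)|=O(\Delta)$, and then examine the four configurations (i)--(iv) listed just before the theorem. Configuration (i) is settled by a single invocation of \propref{adj} in $O(\Delta)$ time. In each of (ii)--(iv) one vertex of $A=\{a,b\}$ is already pinned down (it is $u$ in (ii), and $v$ in (iii), (iv)); I iterate over the $O(\Delta)$ candidates $b\in F(D)$ of the prescribed tightness that are non-adjacent to $a$, and for each such pair verify in $O(\Delta)$ time that every remaining vertex of $F(D)$ is adjacent to $a$ or $b$, by a counter scan analogous to those in \propref{listFD} and \propref{adj}. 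This costs $O(\Delta^2)$ per $D$, so type (a) contributes $O(n)\cdot O(\Delta^2) = O(n\Delta^2)$ and type (b) contributes $O(n\Delta)\cdot O(\Delta^2) = O(n\Delta^3)$, which is the advertised bound; performing the actual swap when an improvement is detected adds only an $O(\Delta)$ overhead that is absorbed.

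The main obstacle I anticipate is the completeness side of the argument rather than the time accounting: I must confirm that every improving 3-swap is captured by some $D$ of one of the two types above, and that for each of (ii)--(iv) the convention of fixing one endpoint of $A$ to the designated anchor does not omit any valid added subset. Both claims rely on the tightness analysis laid out before the theorem, together with auxiliary lemmas deferred to the appendix; once those completeness statements are in hand, the time bound follows directly from the counting above.
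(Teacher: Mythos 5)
Your proposal is correct and is essentially the paper's own proof: the appendix argument likewise anchors one vertex of $A$ (the 3-tight vertex for configurations (i)--(ii), a 2-tight vertex for (iii)--(iv)), enumerates $O(\Delta)$ resp.\ $O(\Delta^2)$ candidates for the partner vertex and the third solution vertex, and verifies each candidate in $O(\Delta)$ time via the counter technique of \propref{adj} and its generalization, arriving at the same $O(n\Delta^2)+O(n\Delta^3)=O(n\Delta^3)$ accounting. The completeness caveat you flag is discharged exactly where you expect, namely by \propref{ls.3} and the case analysis (i)--(iv) preceding the theorem rather than inside the theorem's proof itself.
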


\section{Iterated Local \& Plateau Search}
\label{sec:ilps}

In this section, we present a metaheuristic algorithm
named ILPS (Iterated Local \& Plateau Search)
that repeats the proposed local search
and the plateau search iteratively.

We show the pseudo code of ILPS in \algref{ILSPS}.
The ILPS has four parameters, that is $S$, $k$, $\delta$ and $\nu$,
where $S$ is an initial solution, $k$ is the order of the local search
(i.e., a $k$-minimal solution is searched by $\textsc{LocalSearch}(S,k)$
in \lineref{ls}), 
and $\delta$ and $\nu$ are integers.
The roles of the last two parameters
are mentioned in \secref{ilps.penalty}.

The $\textsc{LocalSearch}(S,k)$ in \lineref{ls} is the subroutine
that returns a $k$-minimal solution from an initial solution $S$,
where $k$ is set to either two or three. 
When $k=2$, it determines a 2-minimal solution
by moving to an improved solution repeatedly 
as long as the 2-neighborhood search algorithm delivers one. 
When $k=3$, it first finds a 2-minimal solution,
and then runs the 3-neighborhood search algorithm. 
If an improved solution is delivered, then
the local search moves to the improved solution
and seeks a 2-minimal one again
since the solution is not necessarily 2-minimal. 
Otherwise, the current solution is 3-minimal. 

Below we explain two key ingredients:
the plateau search and the vertex penalty. 
We describe these in Sections \ref{sec:ilps.plateau} and \ref{sec:ilps.penalty}
respectively. 
We remark that they are inspired by {\em Dynamic Local Search\/}
for the maximum clique problem~\cite{PH.2006}
and {\em Phased Local Search\/}
for the unweighted/weighted maximum independent set
and minimum vertex cover~\cite{P.2009}.

\begin{algorithm}[t!]
  \caption{Iterated Local \& Plateau Search (ILPS)}
  \label{alg:ILSPS}
  \begin{algorithmic}[1]
    \Function{ILPS}{$S,k,\delta,\nu$}
    \State $S^\ast\gets S$
    \Comment{$S^\ast$ is used to store the incumbent solution}
    \State $\rho\gets$~a penalty function
    such that $\rho(v)=0$ for all $v\in V$
    \label{line:rhoinit}
    \State $\rho\gets\textsc{UpdatePenalty}(S,\rho,\delta)$
    \label{line:rho1}
    \While{termination condition is not satisfied}
    \State $S\gets\textsc{LocalSearch}(S,k)$
    \label{line:ls}
    \Comment{The local search returns a $k$-minimal solution}
    \State $S\gets\textsc{PlateauSearch}(S,k)$
    \label{line:plateau}
    \Comment{The plateau search returns a $k$-minimal solution}
    \If{$|S|\le|S^\ast|$}
    \State $S^\ast\gets S$
    \EndIf
    \State $S\gets\textsc{Kick}(S^\ast,\rho,\nu)$
    \label{line:kick}
    \Comment{The initial solution of the next iteration is generated}
    \State $\rho\gets\textsc{UpdatePenalty}(S,\rho,\delta)$
    \label{line:rho2}
    \Comment{The penalty function is updated}
    \EndWhile
    \State {\bf return} $S^\ast$
    \EndFunction
  \end{algorithmic}
\end{algorithm}

\subsection{Plateau Search}
\label{sec:ilps.plateau}
In the plateau search (referred to as $\textsc{PlateauSearch}(S,k)$
in \lineref{plateau}),
we search solutions of the size $|S|$
that can be obtained by swapping a solution vertex $x\in S$
and a non-solution vertex $v\notin S$.
Let ${\mathcal P}(S)$ be the collection of all solutions
that are obtainable in this way.
The size of any solution in ${\mathcal P}(S)$ is $|S|$.  
We execute $\textsc{LocalSearch}(S',k)$
for every solution $S'\in{\mathcal P}(S)$,
and if we find an improved solution $S''$ such that $|S''|<|S'|=|S|$,
then we do the same for $S''$, 
i.e., we execute $\textsc{LocalSearch}(P,k)$ for every solution $P\in{\mathcal P}(S'')$. 
We repeat this until no improved solution is found
and employ a best solution among those searched 
as the output of the plateau search.

We emphasize the efficiency of the plateau search;
all solutions in ${\mathcal P}(S)$
can be listed in $O(|T_1|\Delta)$ time. 
Observe that $(S\setminus\{x\})\cup\{v\}$ is a solution
iff  $v$ is 1-tight such that 
$x$ is the only solution neighbor of $v$,
and $v$ is adjacent to all vertices in $F(\{x\})$ other than $v$.
We can scan all 1-tight vertices in $O(|T_1|)$ time.
For each 1-tight vertex $v$,
the solution neighbor $x$ is detected in $O(\deg(v))$ time,
and whether the last condition is satisfied or not is identified 
in $O(\Delta)$ time from \propref{adj}. 
Dropping $x$ from $S$ and adding $v$ to $S\setminus\{x\}$
can be done in $O(\Delta)$ time.

\subsection{Vertex Penalty}
\label{sec:ilps.penalty}
In order to avoid the search stagnation,
one possible approach is to
apply a variety of initial solutions. 
To realize this, we introduce a penalty function
$\rho:V\rightarrow\mathbb{Z}^+\cup\{0\}$ on the vertices.
The penalty function $\rho$ is initialized so that $\rho(v)=0$ for all $v\in V$ (\lineref{rhoinit}). 
During the algorithm, $\rho$ is managed by the subroutine \textsc{UpdatePenalty}
(Lines \ref{line:rho1} and \ref{line:rho2}).  
When the initial solution $S$ of the next local search is determined, 
it increases the penalty $\rho(v)$ of every vertex $v\in S$ by one,
i.e., $\rho(v)\gets\rho(v)+1$.
Furthermore, to ``forget'' the search history long ago,
it reduces $\rho(v)$ to $\lfloor\min\{\rho(v),\delta\}/2\rfloor$
for all $v\in V$ in every $\delta$ iterations.
This $\delta$ is the third parameter of ILPS
and called the {\em penalty delay\/}.

The $\rho$ is used in the subroutine \textsc{Kick} (\lineref{kick}),
the initial solution generator,
so that vertices with fewer penalties are more likely to be included 
in the initial solution. 
\textsc{Kick} generates an initial solution
by adding non-solution vertices (with respect to the incumbent solution $S^\ast$)
``forcibly'' to $S^\ast$.
The added vertices are chosen one by one as follows;
in one trial, \textsc{Kick} picks up one non-solution vertex. 
It then goes on to the next trial with the probability $(\nu-1)/\nu$
or stops the selection with the probability $1/\nu$,
where $\nu$ is the fourth parameter of ILPS.
Observe that $\nu$ specifies the expected number of added vertices. 
In the first trial, \textsc{Kick} randomly picks
up a non-solution vertex that has the fewest penalty. 
In a subsequent $r$-th trial $(r=2,3,\dots)$,
let $R=\{v_1,\dots,v_{r-1}\}$ be the set of vertices chosen so far. 
\textsc{Kick} samples three vertices randomly from
$V\setminus(S^\ast\cup R\cup N(R))$,
and picks up the one that has the fewest penalty among the three. 
Suppose that 
$R=\{v_1,\dots,v_r\}$ has been picked up 
as the result of $r$ trials. 
Then we construct an independent set $S=(S^\ast\setminus N(R))\cup R$.
The $S$ may not be a solution as there may remain free vertices.
If so, we repeatedly pick up free vertices by
the maximum-degree greedy method until $S$ becomes a solution. 
We use the acquired $S$
as the initial solution of the next local search. 

\section{Computational Results}
\label{sec:comp}

We report some experimental results in this section. 
In \secref{comp.phase}, to gain insights into what kind of instance is difficult,
we examine the phase transition
of difficulty with respect to the edge density. 
The next two subsections are devoted to
observation on the behavior of the proposed method.
In \secref{comp.single}, 
we show how a single run of $\textsc{LocalSearch}(S,k)$
improves a given initial solution. 
In \secref{comp.delay}, we show how the penalty delay $\delta$ affects the search.
Finally in \secref{comp.perform}, 
we compare ILPS with the memetic algorithm~\cite{WCSY.2017},
GRASP+PC~\cite{WLZY.2017},
CPLEX12.6~\cite{CPLEX} and LocalSolver5.5~\cite{LSSOL}
in terms of the solution size,
using DIMACS graphs. 

All the experiments are conducted on a workstation that carries
an Intel Core i7-4770 Processor (up to 3.90GHz by means of Turbo Boost Technology)
and 8GB main memory. The installed OS is Ubuntu 16.04.
Under this environment, 
it takes 0.25\,s, 1.54\,s and 5.90\,s approximately to execute {\tt dmclique\/} 
(\url{http://dimacs.rutgers.edu/pub/dsj/clique/})
for instances {\tt r300.5.b}, {\tt r400.5.b} and {\tt r500.5.b}, respectively. 
The ILPS algorithm is implemented in C++ and
compiled by the g++ compiler (ver.~5.4.0) with {\tt -O2} option. 

\subsection{Phase Transition of Difficulty}
\label{sec:comp.phase}
The phase transition 
has been observed for many combinatorial problems~\cite{GW.1994,GS.1997,GS.2002}. 
Roughly, it is said that over-constrained and under-constrained instances
are relatively easy, and that intermediately constrained ones
tend to be more difficult. 

In the MinIDS problem, the amount of constraints
is proportional to the edge density $p$. 
We examine the change of difficulty with respect to $p$. 
We estimate the difficulty of an instance by
how long CPLEX12.8 takes to solve it.

For each $(n,p)\in\{100,150,200\}\times\{0.00,0.05,\dots,1.00\}$,
we generate 100 random graphs (Erd\"os-R\'enyi model)
with $n$ vertices
and the edge density $p$,
i.e., an edge is drawn between two vertices with probability $p$. 
We solve the 100 instances
by CPLEX12.8 and take the averaged computation time. 
We set the time limit of each run to 60\,s.
If CPLEX12.8 terminates by the time limit,
then we regard the computation time as 60\,s. 

\figref{phasetrans} shows the result.
We may say that instances with the edge densities from 0.1 to 0.4
are likely to be more difficult than others.
In fact, the experiments in \cite{DBL.2017,LP.2013}
mainly deal with random graphs with the edge densities in this range.

\begin{figure}[t]
  \centering
  \includegraphics[width=6.6cm,keepaspectratio]{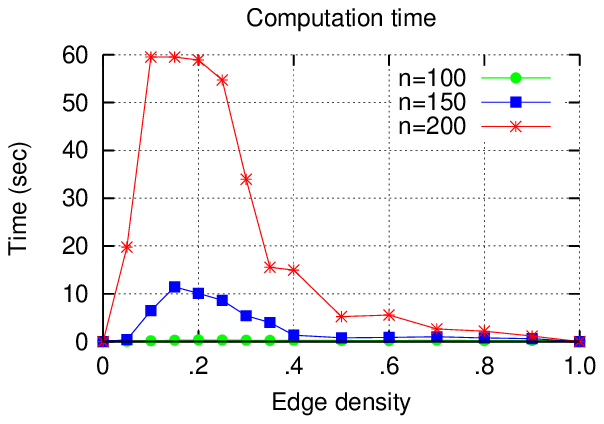}
  \caption{Computation time of CPLEX12.8 for random graphs}
  \label{fig:phasetrans}
\end{figure}

\subsection{A Single Run of Local Search}
\label{sec:comp.single}
We show how a single run of $\textsc{LocalSearch}(S,k)$
improves an initial solution $S$. 
Again we take a random graph. 
We fix the number $n$ of vertices to $10^3$.
For every $p\in\{0.1,\dots,0.9,0.95,0.99\}$, 
we generate $100$ random graphs. 
Then for each graph, we run $\textsc{LocalSearch}(S,k)$ five times,
where we use different random seeds in each time
and construct the initial solution $S$ randomly. 

We show the averaged sizes of random, 2-minimal and 3-minimal solutions
in \tabref{single}. 
We see that, the larger the edge density $p$ is,
the fewer the solution size becomes.
The local search improves a random solution to some extent.
$\textsc{LocalSearch}(S,3)$ improves the solution more than $\textsc{LocalSearch}(S,2)$. 
The difference between the two local searches is
the largest when $p=0.1$, that is $37.37-35.44=1.93$.
The difference gets smaller when $p$ gets larger.
In particular, when $p>0.9$, we see no difference.  

\begin{table}[t]
\centering
\caption{Averaged sizes of random, 2-minimal and 3-minimal solutions in random graphs with $10^3$ vertices}
\label{tab:single}
\begin{tabular}{crrrrrrrrrrr}
\hline
& \multicolumn{1}{c}{$p=.1$} &
\multicolumn{1}{c}{.2} &
\multicolumn{1}{c}{.3} &
\multicolumn{1}{c}{.4} &
\multicolumn{1}{c}{.5} &
\multicolumn{1}{c}{.6} &
\multicolumn{1}{c}{.7} &
\multicolumn{1}{c}{.8} &
\multicolumn{1}{c}{.9} &
\multicolumn{1}{c}{.95} &
\multicolumn{1}{c}{.99} \\
\hline
random & 44.57	&	24.42	&	16.70	&	12.50	&	9.66	&	7.70	&	6.12	&	4.84	&	3.62	&	3.00	&	2.12\\
2-minimal & 37.37	&	20.36	&	13.84	&	10.18	&	7.86	&	6.12	&	4.95	&	3.95	&	2.99	&	2.00	&	1.95\\
3-minimal &35.44	&	19.04	&	12.74	&	9.28	&	7.01	&	5.64	&	4.06	&	3.02	&	2.15	&	2.00	&	1.95\\
\hline
\end{tabular}
\end{table}

Let us discuss computation time. 
In the left of \figref{lsorder},
we show how 
the averaged computation time changes with respect to $p$. 
We see that the computation time of $\textsc{LocalSearch}(S,3)$ is 
tens to thousands of times the computation time of $\textsc{LocalSearch}(S,2)$. 
However, it does not necessarily
diminish the value of the 3-neighborhood search. 
As will be shown in \secref{comp.perform},
when $k=3$, ILPS can find such a good solution
that is not obtained by $k=2$. 

In general, for a fixed $k$,
it takes more computation time when $p$ is larger.
Recall \thmref{ls.2} (resp., \ref{thm:ls.3});
when $k=2$ (resp., 3),
the $k$-neighborhood search algorithm 
finds an improved solution for the current solution $S$
or concludes that $S$ is $k$-minimal in 
$O(n\Delta)$ (resp., $O(n\Delta^3)$) time. 
Roughly, $\Delta$ is increasing as $p$ gets larger. 

For $k=3$, we attribute the peak at $p=0.8$
to the number of 3-tight vertices. 
In the right of \figref{lsorder},
We show the averaged numbers of 2- and 3-tight vertices
with respect to 3-minimal solutions. 
The 3-neighborhood search algorithm searches
2- and 3-tight vertices.
The numbers of both vertices are generally non-decreasing
from $p=0.1$ to 0.8, but when $p>0.8$, 
the number of 3-tight vertices decreases dramatically. 
This is due to the solution size. 
The solution size gives an upper bound
on the tightness of any non-solution vertex,
and when $p>0.8$, the averaged size of 
a 3-minimal solution is less than three; see \tabref{single}.
Since most of the non-solution vertices are either 1- or 2-tight,
we hardly handle the situations (i) and (ii) in \secref{ls.3}.


\invis{
As an extreme example, suppose that $r=1.0$,
i.e., $G$ is a complete graph.
For any initial solution that consists of one vertex,
the $k$-neighborhood search algorithm terminates immediately,
concluding that the solution is $k$-minimal,
for both $k=2$ and 3. 
This is because, in any solution to a complete graph,
all vertices are 1-tight except one solution vertex,
and the 2- and 3-neighborhood search algorithms halt
without scanning any vertex. 
}

\begin{figure}[t]
  \centering
  \begin{tabular}{cc}
    \includegraphics[width=6.6cm,keepaspectratio]{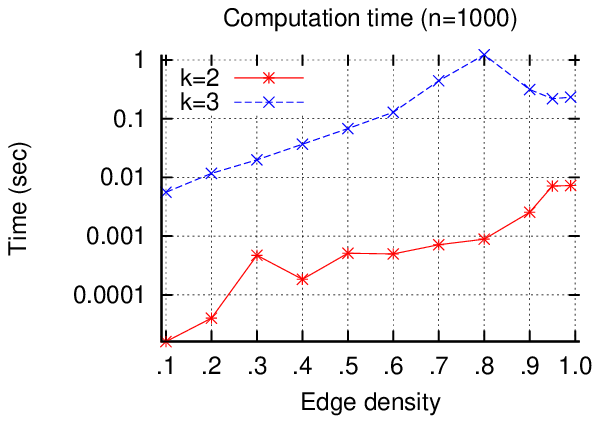} &
    \includegraphics[width=6.6cm,keepaspectratio]{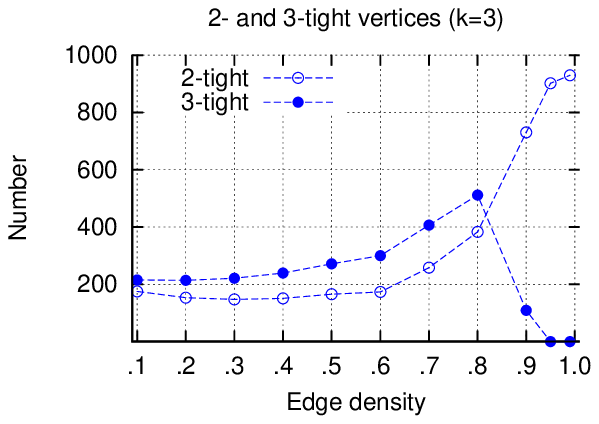}
  \end{tabular}
  \caption{(Left) averaged computation time 
    that $\textsc{LocalSearch}(S,k)$ takes to decide a $k$-minimal solution
    (Right) numbers of 2- and 3-tight vertices with respect to 3-minimal solutions}
  \label{fig:lsorder}
\end{figure}

\subsection{Penalty Delay}
\label{sec:comp.delay}
We introduced the notion of vertex penalty
to control the search diversification.
When the penalty delay $\delta$ is larger,
more varieties of initial solutions are expected to be tested in ILPS. 

To illustrate the expectation,
we evaluate how many iterations ILPS takes
until all vertices are covered by the initial solutions,
that is, used in the initial solutions at least once.
The solid line in \figref{delay} shows the number of iterations
taken to cover all vertices.
The graph we employ here is a $10\times10$ grid graph such that
each vertex is associated with a 2D integral point
$(i,j)\in\{1,\dots,10\}^2$, and that
two vertices $(i,j)$ and $(i',j')$ are adjacent iff $|i-i'|+|j-j'|=1$.
For each $\delta$, 
the number of iterations is averaged over 500 runs of ILPS
with different random seeds, where we fix $(k,\nu)=(2,1)$
and construct the first initial solution $S$
by the maximum-degree greedy algorithm. 

\begin{figure}[t]
  \centering
  \includegraphics[width=6.6cm,keepaspectratio]{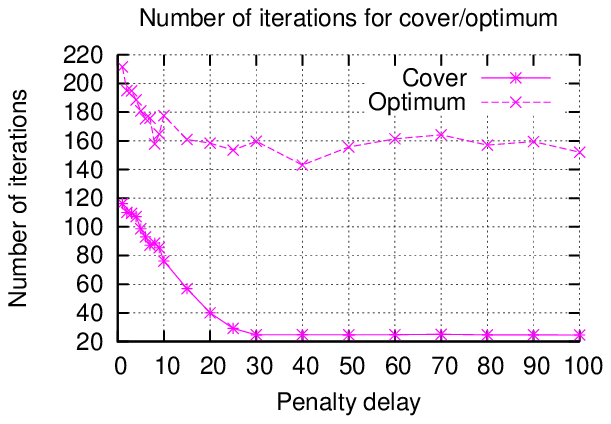}
  \caption{Averaged numbers of iterations to cover all vertices (solid line) and to find the optimum (dashed line)}
  \label{fig:delay}
\end{figure}

The observed phenomenon meets our expectation;
The number is non-increasing with respect to $\delta$
and saturated for $\delta\ge30$.
In other words, when $\delta$ is larger, 
more varieties of initial solutions are generated
in a given number of iterations.

However, setting $\delta$ to a large value does not necessarily
lead to discovery of better solutions.
The dashed line in \figref{delay} shows the averaged number of iterations 
that ILPS takes to find an optimal solution;
we know that the optimal size is 24
since we solve the instance optimally by CPLEX. 
When $\delta\le40$,
the number is approximately decreasing
and takes the minimum at $\delta=40$,
but a larger $\delta$ does not make any improvement. 
Hence, given an instance,
we need to choose an appropriate value of $\delta$ carefully.

\subsection{Performance Validation}
\label{sec:comp.perform}

We run ILPS algorithm for 80 DIMACS instances that are
downloadable from \cite{DIMACS}. 
We generate the first initial solution $S$ by the maximum-degree greedy method,
and fix the parameter $\nu$ to three. 
For $(k,\delta)$, all pairs in 
 $\{2,3\}\times\{2^0,\dots,2^6\}$ are tested. 
For each instance and each $(k,\delta)$,
we run ILPS algorithm 10 times, using different random seeds. 
We terminate the algorithm by the time limit.
The time limit is set to 200\,s. 
When $k=3$, we modify \algref{ILSPS} so that 
$\textsc{PlateauSearch}(S,k)$ in \lineref{plateau}
is called only when $|S|\le|S^\ast|+2$
as the plateau search is rather time-consuming. 

We take four competitors from \cite{WCSY.2017} and \cite{WLZY.2017}.
The first is MEM, a tabu-search based memetic algorithm in \cite{WCSY.2017}.
The second is GP, the GRASP+PC algorithm in \cite{WLZY.2017}. 
The third is CP, which stands for CPLEX12.6~\cite{CPLEX} that solves
an integer optimization model of the MinIDS problem. 
The fourth is LS, which stands for LocalSolver5.5~\cite{LSSOL},
a general discrete optimization solver based on local search. 
MEM is run on a computer with a 2.0GHz CPU and a 4GB memory,
whereas the other competitors are run on computers with a 2.3GHz CPU
and an 8GB memory. 
The time limit of MEM and GP is set to 200\,s,
and that of CP and LS is set to 3600\,s. 

In \tabref{dimacs.sel},
we show the results on selected instances. 
The columns ``$n$'' and ``$p$'' indicate the number of vertices
and the edge density, respectively. 
The edge density is between 0.1 and 0.5
in all instances except hamming8-2.
In our context, the instances are expected to be difficult. 
For ILPS, we show the results for $(k,\delta)=(2,2^6)$ in detail,
regarding this pair as the representative. 
The columns ``Min'' and ``Max'' indicate
the minimum/maximum solution size over 10 runs,
and the column ``Avg'' indicates the average.
The column ``TTB'' indicates the time to best (in seconds),
that is, the average of the computation time
that ILPS takes to find the solution of the size ``Min''. 
The symbol $\varepsilon$ represents that
the time is less than 0.1\,s.  
The column ``Best'' indicates the minimum solution size attained
over all $(k,\delta)\in\{2,3\}\times\{2^0,\dots,2^6\}$. 
The rightmost four columns indicate the solution size
attained by the competitors. 
The symbol $\ast$ before the instance name
indicates that the solution size attained by CPLEX is optimal.

\begin{table}[t!]
  \centering
  \caption{Selected results from the validation experiments on DIMACS graphs}
  \label{tab:dimacs.sel}
  \begin{tabular}{lrr|rrrr|r|r|rrr}
\hline
		&	\multicolumn{1}{c}{$n$}	&	\multicolumn{1}{c|}{$p$}	&	\multicolumn{5}{c|}{ILPS}									&	\multicolumn{1}{c|}{\cite{WCSY.2017}}	&	\multicolumn{3}{c}{\cite{WLZY.2017}}					\\
\cline{4-12}																								
		&		&		&	\multicolumn{4}{c|}{($k=2$, $\nu=2^6$)}							&		&	\multicolumn{1}{c|}{MEM}	&	\multicolumn{1}{c}{GP}	&	\multicolumn{1}{c}{CP}	&	\multicolumn{1}{c}{LS}	\\
		&		&		&	Min	&	Avg	&	Max	&	TTB	&	Best	&		&		&		&		\\
\hline																								
	~\,brock400\_2	&	400	&	.25	&	10	&	10.0	&	10	&	1.1	&	9	&	9	&	10	&	10	&	11	\\
	~\,C1000.9	&	1000	&	.10	&	27	&	27.8	&	29	&	0.0	&	\bf 26	&	27	&	27	&	29	&	30	\\
	$\ast$C125.9	&	125	&	.10	&	14	&	14.0	&	14	&	0.1	&	14	&	14	&	15	&	14	&	14	\\
	~\,C2000.9	&	2000	&	.10	&	\bf 32	&	33.6	&	35	&	12.1	&	\bf 32	&	33	&	33	&	48	&	36	\\
	~\,C4000.5	&	4000	&	.50	&	\bf 7	&	7.9	&	8	&	49.7	&	\bf 7	&	8	&	8	&	-	&	-	\\
	~\,C500.9	&	500	&	.10	&	22	&	22.2	&	23	&	92.3	&	\bf 21	&	22	&	23	&	23	&	22	\\
	~\,gen400\_p0.9\_55	&	400	&	.10	&	20	&	20.1	&	21	&	39.4	&	20	&	20	&	21	&	22	&	22	\\
	~\,gen400\_p0.9\_65	&	400	&	.10	&	20	&	20.7	&	21	&	99.0	&	20	&	20	&	21	&	21	&	22	\\
	$\ast$hamming8-2	&	256	&	.03	&	36	&	36.0	&	36	&	0.0	&	32	&	-	&	32	&	32	&	32	\\
	~\,keller6	&	3361	&	.18	&	18	&	18.0	&	18	&	26.1	&	\bf 16	&	18	&	18	&	32	&	19	\\
	$\ast$san200\_0.7\_1	&	200	&	.30	&	6	&	6.1	&	7	&	85.9	&	6	&	6	&	7	&	6	&	7	\\
	$\ast$san200\_0.9\_1	&	200	&	.10	&	15	&	15.0	&	15	&	16.7	&	15	&	15	&	16	&	15	&	16	\\
	~\,san400\_0.7\_3	&	400	&	.30	&	7	&	7.8	&	8	&	106.6	&	7	&	7	&	8	&	8	&	9	\\
\hline
\end{tabular}
\end{table}

The table contains only results on the 13 selected instances
such that the solutions sizes attained by ``Best'', ``MEM'' and ``GP''
are not-all-equal, except hamming8-2.
We guarantee that, for the remaining 67 instances,
ILPS's ``Best'' is as good as any competitor. 
The boldface indicates
that the solution size is strictly smaller than
those of the competitors. 
Then we update the best-known solution size in five graphs. 
These show the effectiveness of the proposed local search
and the ILPS algorithm. 
All results are included in the appendix.

\invis{
With $(k,\delta)=(2,2^6)$,
ILPS finds a solution whose size
is no more than the three competitors
for all instances except hamming8-2.
}
For hamming8-2, when $k=2$, ILPS 
cannot find a solution of the optimal size 32
for any penalty delay $\delta\in\{2^0,\dots,2^6\}$.
However, when $k=3$, ILPS finds an optimal solution
with $\delta=2^0$, $2^1$ and $2^2$.

Before closing this section, let us report our preliminary results briefly.
\begin{itemize}
\item A preliminary version of ILPS happened to find
  a solution of the size 31 for C2000.9
  and a solution of the size 15 for keller6. 
  See the detail for the appendix. 
\item Let us consider a finer swap operation, {\em $(j,k)$-swap\/},
  that obtains another solution by dropping exactly $k$ vertices
  from the current one
  and then by adding exactly $j$ vertices to it.
  One can prove that, given a solution $S$ and a constant $k$,
  we can improve $S$ by $(1,k)$-swap or conclude that it is not possible
  in $O(n\Delta)$ time.
  We implemented $(1,k)$-swap in a preliminary version of ILPS,
  but it does not yield significant improvement even when $k$ is set to a constant larger than three. 
  
\item We tested Laforest and Phan's exact algorithm~\cite{LP.2013},
  and found that the algorithm is not suitable for a task
  of finding a good solution quickly. 
  The source code is available at
  \url{http://todo.lamsade.dauphine.fr/spip.php?article42}.
  \invis{
  As the original program does not have a function to terminate the algorithm
  by a given time limit, we append it to the source code by ourselves.
  The running time of the modified program
  is approximately 1.5 times that of the original program.
  Then, setting the time limit to 300s,
  we run the program for 80 DIMACS graphs.
  It does not work for large graphs such that
  the number of vertices or edges is no less than $2^{16}$. 
  The results are not good, and we claim that
  }
\item BHOSLIB~\cite{BHOSLIB} is another well-known collection
  of benchmark instances.
  It contains 36 instances such that
  $n$ is between 450 and 4000 and that $p$ is no less than 0.82. 
  Hence, the BHOSLIB instances are expected to be easy in our context.
  The ILPS with $(k,\delta)=(2,2^6)$
  finds a solution of the size three for all the instances.
  We also run CPLEX12.8 for 200\,s,
  generating an initial solution by
  the maximum-degree greedy algorithm.
  CPLEX12.8 finds a solution of the size five for frb100-40,
  and a solution of the size three for the other instances. 
  In addition, the solution of the size three is proved to be optimal 
  for 15 instances whose names start with frb30, frb35 and frb40. 
\end{itemize}

\section{Concluding Remark}
\label{sec:conc}

We have considered an efficient local search
for the MinIDS problem.
We proposed fast $k$-neighborhood search algorithms
for $k=2$ and 3, and developed a metaheuristic algorithm named ILPS
that repeats the local search and the plateau search iteratively. 
ILPS is so effective that it updates the best-known solution size
in five DIMACS graphs. 

The proposed local search is applicable to other metaheuristics
such as genetic algorithms,
as a key tool of local improvement. 
The future work includes an extension of the local search 
to a weighted version of the MinIDS problem.

\newpage
\bibliographystyle{plainurl}
\bibliography{mybib}

\newpage
\appendix
\section*{Appendix}
\subsection*{Proof of \thmref{ls.3}}
For preparation, let us introduce the following proposition.
This is a generalization of \propref{adj}
in the sense that the vertex subset is taken arbitrarily. 
The proof is similar to \propref{adj}. 
\begin{prop}
  \label{prop:adjgen}
  Given an arbitrary vertex subset $F$ and a vertex $v\in F$,
  we can decide whether $v$ is adjacent to all vertices in $F\setminus\{v\}$ in $O(|F|+\deg(v))$ time. 
\end{prop}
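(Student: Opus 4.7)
The plan is to adapt the counter-marking technique used in the proofs of \propref{listFD} and \propref{adj}, exploiting the fact that we maintain a globally-allocated integer array $c:V\rightarrow\mathbb{Z}$ whose entries are reset to zero outside the range of indices we just touched. Under this invariant, individually reading, writing or resetting a single entry $c(u)$ costs $O(1)$ time.

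The algorithm proceeds in three phases. First, scan $F\setminus\{v\}$ once and set $c(u)\gets 1$ for each $u\in F\setminus\{v\}$; this is $O(|F|)$ time. Second, walk the adjacency list of $v$, which costs $O(\deg(v))$ time, and count the number of neighbors $u\in N(v)$ for which $c(u)=1$. Third, scan $F\setminus\{v\}$ again and reset $c(u)\gets 0$ to restore the invariant, costing another $O(|F|)$ time.

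For correctness, observe that a vertex $u\in F\setminus\{v\}$ contributes $1$ to the count in the second phase if and only if $u\in N(v)$, i.e.\ $u$ is adjacent to $v$. Therefore the final count equals $|N(v)\cap(F\setminus\{v\})|$, and $v$ is adjacent to every vertex of $F\setminus\{v\}$ if and only if this count equals $|F|-1$. Summing the three phases gives the desired $O(|F|+\deg(v))$ bound.

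The main (minor) obstacle is justifying that the marker setup and teardown really run in $O(|F|)$ rather than $O(n)$; this is handled by the standard trick of allocating $c$ once at program start, maintaining the invariant that $c(u)=0$ for all $u\notin$ (the set currently being processed), and touching only entries indexed by $F\setminus\{v\}$ during both the marking and the cleanup phases, which is exactly the pattern already employed in \propref{listFD} and \propref{adj}.
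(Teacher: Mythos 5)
Your proof is correct and follows essentially the same approach as the paper's: mark the vertices of $F$, scan the adjacency list of $v$, count marked neighbors, and compare the count to $|F|-1$. The only difference is bookkeeping --- you restore the marker array with an explicit $O(|F|)$ cleanup pass, whereas the paper invalidates stale marks by incrementing a global generation counter $\gamma$; both yield the same $O(|F|+\deg(v))$ bound.
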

\begin{proof}
  We let every $v\in V$ have another integral variable,
  which we denote by $\chi(v)$. 
  Initially, $\chi(v)$ is set to zero. 
  We also maintain a global integral variable $\gamma$ 
  that is set to one initially. 

  First, for all $v\in F$, we set $\chi(v)$ to the current $\gamma$
  (i.e., $\chi(v)\gets\gamma$). 
  We then count the number of vertices $v$ in $N(v_0)$
  such that $\chi(v)=\gamma$. 
  If the number equals to $|F|-1$, then we can regard
  that $v_0$ is adjacent to all vertices in $F\setminus\{v_0\}$. 
  As postprocessing, we increase $\gamma$ by one
  (i.e., $\gamma\gets\gamma+1$). 
\end{proof}

An integral variable is bounded in conventional programming languages.
If $\gamma$ reaches the upper limit (e.g., INT\_MAX in C),
then we reset $\chi(v)$ to zero for all $v\in V$
and $\gamma$ to one again.

We prove \thmref{ls.3} in \secref{ls.3}. 
If there is an improved solution $(S\setminus D)\cup A$,
then the four situations from (i) to (iv) are possible
as to the tightnesses of vertices in $A$. 
(For illustration, see \figref{N3}.)
Given a non-solution vertex $a$ in (i) to (iv),
the following Lemmas~\ref{lem:ls.3-1} to \ref{lem:ls.3-4}
show time complexities of
finding an improved solution or concluding that
no such solution exists, respectively.  

\begin{lem}
  \label{lem:ls.3-1}
  Suppose that a 3-tight vertex $a$ is given. 
  Let $D=\{x,y,z\}$ be the set of solution neighbors of $a$.
  We can decide whether $(S\setminus D)\cup\{a\}$ is a
  solution or not in $O(\Delta)$ time. 
\end{lem}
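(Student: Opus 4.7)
The plan is to reduce the lemma to a direct application of Propositions~\ref{prop:listFD} and~\ref{prop:adj}. First I would argue that $(S\setminus D)\cup\{a\}$ is automatically an independent set: since $a\in F(D)$, every solution neighbor of $a$ lies in $D$, so removing $D$ from $S$ and then adding $a$ introduces no edge inside the new set.

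Next I would reformulate the dominating condition. Every vertex in $V\setminus((S\setminus D)\cup\{a\})$ must be adjacent to $(S\setminus D)\cup\{a\}$. Vertices outside $S\cup F(D)\cup D$ still have at least one solution neighbor in $S\setminus D$ (this is the definition of $F(D)$), so they remain dominated automatically. The three vertices of $D=\{x,y,z\}$ are dominated by $a$ itself, because $a$ is 3-tight with $N(a)\cap S=D$. Hence the only nontrivial requirement is that every vertex in $F(D)\setminus\{a\}$ is dominated by $a$, i.e., that $a$ is adjacent to every other vertex of $F(D)$.

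Having isolated this condition, I would invoke Proposition~\ref{prop:adj} directly with $k=3$: whether $a$ is adjacent to all other vertices of $F(D)$ can be decided in $O(k\Delta)=O(\Delta)$ time. This subsumes the construction of $F(D)$ via Proposition~\ref{prop:listFD}, which also runs in $O(k\Delta)=O(\Delta)$ time. Combining the two gives the claimed $O(\Delta)$ bound.

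There is no real obstacle beyond carefully justifying that the dominating check really reduces to the adjacency test on $F(D)\setminus\{a\}$; the remainder is bookkeeping. The key observation that makes the time bound work is that $|D|$ is the constant $3$, so the factor $k$ in $O(k\Delta)$ collapses, and there is no need to iterate any nontrivial search over candidate added subsets in this case since $|A|=1$ is forced.
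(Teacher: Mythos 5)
Your proof is correct and follows essentially the same route as the paper: reduce the question to whether $a$ is adjacent to every other vertex of $F(D)$ and apply Proposition~\ref{prop:adj} with $k=3$ to get the $O(\Delta)$ bound. The only difference is that you spell out the reduction to the adjacency test (independence, domination of $D$ by $a$, and automatic domination outside $F(D)$), which the paper leaves implicit.
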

\begin{proof}
The set $D$ can be decided in $O(\deg(a))$ time. 
It suffices to decide whether $a$ is adjacent to 
all vertices in $F(D)\setminus\{a\}$. 
This can be done in $O(\Delta)$ time from \propref{adj}. 
\end{proof}

\begin{lem}
  \label{lem:ls.3-2}
  Suppose that a 3-tight vertex $a$ is given. 
  Let $D=\{x,y,z\}$ be the set of solution neighbors of $a$.
  We can find a non-solution vertex $b\in F(D)$ 
  such that $(S\setminus D)\cup\{a,b\}$ is a solution
  or conclude that such $b$ does not exist in $O(\Delta^2)$ time. 
\end{lem}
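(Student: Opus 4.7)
The plan is to reduce the problem to a bounded number of adjacency-checking queries, each solvable in $O(\Delta)$ time by Propositions~\ref{prop:listFD},~\ref{prop:adj}~and~\ref{prop:adjgen}. First I would invoke the algorithm of \propref{listFD} with $D=\{x,y,z\}$ (computable in $O(\deg(a))=O(\Delta)$ time from $a$) to enumerate $F(D)$ in $O(3\Delta)=O(\Delta)$ time; note $|F(D)|=O(\Delta)$.

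Next I would identify the set of viable candidates for $b$. For $(S\setminus D)\cup\{a,b\}$ to be an independent set we need $b\ne a$ and $ab\notin E$, and for it to be maximal within $G[D\cup F(D)]$ we need every vertex of $F(D)\setminus\{a,b\}$ to be dominated by $a$ or $b$ (the three vertices of $D$ are already dominated by the 3-tight vertex $a$). Let $F'=\{w\in F(D)\setminus\{a\}\mid aw\notin E\}$. Then $(S\setminus D)\cup\{a,b\}$ is a solution iff $b\in F'$ and $b$ is adjacent to every vertex in $F'\setminus\{b\}$. The set $F'$ can be extracted in $O(|F(D)|+\deg(a))=O(\Delta)$ time by marking the neighbors of $a$ once and scanning $F(D)$.

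The final step iterates over the $O(\Delta)$ candidates $b\in F'$ and, for each, invokes \propref{adjgen} on the pair $(F',b)$ to decide in $O(|F'|+\deg(b))=O(\Delta)$ time whether $b$ is adjacent to all of $F'\setminus\{b\}$. The first $b$ that passes this check yields the desired improved solution; if none does, we conclude no such $b$ exists. Summing the per-candidate cost over $O(\Delta)$ candidates yields the claimed $O(\Delta^2)$ bound, with the $O(\Delta)$ preprocessing absorbed.

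The only mild subtlety is that \propref{adj} is stated only for $F(D)$, so to test adjacency within $F'$ (a strict subset) I rely on the slightly more general \propref{adjgen}; this is the step where one must be careful not to re-enumerate $F(D)$ or reinitialize counters for every candidate, which would inflate the bound to $O(\Delta^3)$. Using the persistent counter $\chi$ with the incrementing tag $\gamma$ from the proof of \propref{adjgen} keeps each query genuinely at $O(\Delta)$, which is the key to meeting the $O(\Delta^2)$ target.
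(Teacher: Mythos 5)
Your proposal is correct and follows essentially the same route as the paper: both construct the set of vertices in $F(D)$ not adjacent to $a$ (your $F'$, the paper's $F$) in $O(\Delta)$ time and then test each of the $O(\Delta)$ candidates $b$ for adjacency to all of $F'\setminus\{b\}$ via \propref{adjgen} at $O(\Delta)$ cost each. Your added remarks on why the maximality condition reduces to domination of $F'$ and on amortizing the counter initialization are exactly the details the paper leaves implicit.
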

\begin{proof}
  Let $F$ be the subset of $F(D)$ 
  such that the vertices in $F$ are not adjacent to $a$. 
  The sets $D$, $F(D)$ and $F$ can be constructed in $O(\Delta)$ time. 
  All we have to do is to check whether there is $b\in F$
  such that $b$ is adjacent to all vertices in $F\setminus\{b\}$. 
  From \propref{adjgen} and $|F|=O(\Delta)$, 
  this can be done in $O(\Delta^2)$ time. 
\end{proof}

\begin{lem}
  \label{lem:ls.3-3}
  Suppose that a 2-tight vertex $a$ is given. 
  We can decide in $O(\Delta^2)$ time
  whether there exists a 2-tight vertex $b$ such that:
  \begin{itemize}
  \item $a$ and $b$ have exactly one solution neighbor in common;
  \item $(S\setminus D)\cup\{a,b\}$ is a solution,
    where $D=(N(a)\cup N(b))\cap S$.
  \end{itemize}
\end{lem}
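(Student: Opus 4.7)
The plan is to enumerate candidate vertices $b$ and verify the swap conditions for each in $O(\Delta)$ time, giving $O(\Delta^2)$ total. First compute $N(a)\cap S=\{x,y\}$ in $O(\deg(a))$ time. Any valid $b$ must be 2-tight and share exactly one solution neighbor with $a$, so $N(b)\cap S$ equals either $\{x,z\}$ or $\{y,z\}$ for some $z\in S\setminus\{x,y\}$; in particular, $b$ is adjacent to exactly one of $x,y$, because being 2-tight and adjacent to both would force $N(b)\cap S=\{x,y\}$ and hence two common solution neighbors. Therefore every candidate lies in $(N(x)\cup N(y))\cap T_2$, whose size is $O(\deg(x)+\deg(y))=O(\Delta)$, and each candidate appears in exactly one of the two neighborhood scans.

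As preprocessing, I would mark the vertices of $N(a)$ using the counter/flag technique of \propref{adj}, so that the query ``is $v$ adjacent to $a$?'' can be answered in $O(1)$; this costs $O(\Delta)$. I then iterate over each 2-tight $b\in N(x)\cup N(y)$: read its two solution neighbors, let $z$ be the one not in $\{x,y\}$ (skip $b$ if no such $z$ exists), and set $D=\{x,y,z\}$. For this $b$, I perform (i) an $O(1)$ test using the $N(a)$ mark that $a$ and $b$ are non-adjacent; (ii) an $O(\Delta)$ computation of $F(D)$ by \propref{listFD}; and (iii) an $O(\Delta)$ check that every vertex of $F(D)\setminus\{a,b\}$ is adjacent to $a$ or to $b$, carried out by temporarily marking $N(b)$ and scanning $F(D)\setminus\{a,b\}$ while inspecting the two marks, then unmarking $N(b)$. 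If all three tests succeed for some $b$, return ``yes''; otherwise return ``no'' after the loop terminates.

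Correctness splits into completeness and soundness. Completeness follows from the characterization in the first paragraph: every valid $b$ lies in the candidate set. Soundness follows by checking that, under (i)--(iii), the set $\{a,b\}$ is maximal independent in $G[D\cup F(D)]$: independence is exactly (i); every vertex of $D=\{x,y,z\}$ is dominated ($x$ and $y$ by $a$, and $z$ by $b$ since $z\in N(b)\cap S$), and every remaining vertex of $F(D)$ is dominated by (iii). Hence $(S\setminus D)\cup\{a,b\}$ is a solution with $|A|=2<3=|D|$. The overall running time is the $O(\Delta)$ preprocessing plus $O(\Delta)$ candidates each processed in $O(\Delta)$ time, which is $O(\Delta^2)$. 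The main subtlety worth highlighting is the candidate characterization in the first paragraph, since a naive scan over all of $T_2$ would cost $O(n\Delta)$ per call and break the promised bound; the observation that only neighbors of $x$ or $y$ need to be inspected is what confines the search to $O(\Delta)$ candidates.
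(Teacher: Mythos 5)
Your proof is correct and takes essentially the same approach as the paper's: both restrict the candidates for $b$ to the 2-tight vertices in $N(x)\cup N(y)$ (at most $2\Delta$ of them) and verify each candidate in $O(\Delta)$ time by computing $F(D)$ and checking that $\{a,b\}$ is independent and dominates $D\cup F(D)$. Your write-up is merely more explicit about the marking technique and the soundness argument, which the paper leaves implicit.
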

\begin{proof}
Let $D_a=\{x,y\}$ be the set of solution neighbors of $a$. 
The target 2-tight vertex $b$ 
should be a neighbor of either $x$ or $y$,
but not both, as $a$ and $b$
have exactly one solution neighbor in common. 
Hence, there are at most $\deg(x)+\deg(y)\le2\Delta$
candidates for $b$. 

For each candidate of $b$, let $D_b=\{x,z\}$ be the set of solution neighbors of $b$. 
(If $z=y$, then we discard this candidate.)
Let $D=D_a\cup D_b$. 
To check whether $(S\setminus D)\cup\{a,b\}$ is a solution,
it suffices to check
whether $\{a,b\}$ is a solution in the subgraph $G[F(D)]$. 
It takes $O(\Delta)$ time to decide $D_b$, 
to decide whether $\{a,b\}$ is independent,
and decide whether $\{a,b\}$ is dominating
the vertices in $F(D)$. 
\end{proof}

\begin{lem}
  \label{lem:ls.3-4}
  Suppose that a 2-tight vertex $a$ is given. 
  We can decide in $O(\Delta^3)$ time
  whether there exist a 2-tight vertex $a'$
  and a 1-tight vertex $b$ such that:
  \begin{itemize}
    \item $a$ and $a'$ are adjacent,
      and have exactly one solution neighbor in common.
      Let $D_a=\{x,y\}$ and $D_{a'}=\{x,z\}$ be the sets
      of solution neighbors of $a$ and $a'$ respectively;
    \item the unique solution neighbor of $b$ is $z$; 
    \item $(S\setminus D)\cup\{a,b\}$ is a solution,
      where $D=\{x,y,z\}$.
  \end{itemize}
\end{lem}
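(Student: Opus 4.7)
\medskip
\noindent\textbf{Proof proposal for \lemref{ls.3-4}.}

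The plan is to enumerate candidates for $a'$ first, and then, for each $a'$, enumerate candidates for $b$, doing the domination check via a routine analogous to \propref{adjgen}. Since $a$ and $a'$ are required to be adjacent, every candidate $a'$ lies in $N(a)$, so at most $\deg(a)=O(\Delta)$ candidates are considered. For each candidate $a'$, I would scan its adjacency list once to (i) verify that $a'$ is 2-tight, (ii) read off its solution neighbors $\{x,z\}$, and (iii) verify that exactly one of them lies in $D_a=\{x,y\}$ and that the other vertex (which becomes $z$) is distinct from $y$. All of this takes $O(\deg(a'))=O(\Delta)$ time.

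Once $a'$ is fixed, set $D=\{x,y,z\}$ and invoke the procedure of \propref{listFD} to list $F(D)$ in $O(\Delta)$ time. During that listing it is easy to simultaneously flag those $u\in F(D)$ that lie in $N(a)\cup\{a\}$: walk the adjacency list of $a$ once and mark a scratch field on each neighbor before listing $F(D)$, then read the flag while emitting $F(D)$. Let $F'$ denote the resulting set of vertices in $F(D)$ that are neither $a$ nor adjacent to $a$; its size is $O(\Delta)$.

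Now any valid $b$ must be (i) 1-tight with unique solution neighbor $z$, hence a non-solution neighbor of $z$ of tightness one, and (ii) a member of $F'$, for otherwise $\{a,b\}$ would fail to be independent. Since the data structure of \secref{prel.ds} stores the tightness $\tau(\cdot)$ per non-solution vertex, I can enumerate the 1-tight neighbors of $z$ in $O(\deg(z))=O(\Delta)$ time and filter them through $F'$. For each surviving candidate $b$, the solution $(S\setminus D)\cup\{a,b\}$ exists iff $\{a,b\}$ is maximal independent in $G[D\cup F(D)]$: independence follows from $b\in F'$, the vertices $x,y,z$ are dominated by $a$ and $b$ by construction, so the only remaining obligation is that every vertex in $F'\setminus\{b\}$ be adjacent to $b$. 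This is precisely the predicate handled by \propref{adjgen}, costing $O(|F'|+\deg(b))=O(\Delta)$ per $b$.

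Summing up, each outer iteration over $a'$ spends $O(\Delta)$ to prepare $F'$ and then $O(\Delta)$ candidates for $b$ at $O(\Delta)$ time each, giving $O(\Delta^2)$ per $a'$ and $O(\Delta^3)$ in total, as required. The main obstacle I anticipate is not algorithmic but bookkeeping: the counter in \propref{listFD} and the marker/counter used by \propref{adjgen} must be reset cleanly between iterations without paying more than $O(\Delta)$ per reset, so I would rely on the generation-counter trick used in \propref{adjgen} (incrementing a global $\gamma$ and comparing marks to it) both for the ``adjacent-to-$a$'' marks and for the ``member-of-$F'$'' marks, so that no explicit clearing of $V$-sized arrays is ever needed. \quad $\Box$
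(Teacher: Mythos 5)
Your proof is correct and follows essentially the same route as the paper's: enumerate candidates for $a'$, derive $z$ and $D=\{x,y,z\}$, enumerate $b$ among the 1-tight neighbors of $z$, and test each candidate pair in $O(\Delta)$ time by checking that $\{a,b\}$ dominates $F(D)$, giving $O(\Delta^3)$ overall. The only cosmetic differences are that you draw the candidates for $a'$ from $N(a)$ rather than from $N(x)\cup N(y)$ as the paper does, and that you spell out the domination check and counter-resetting bookkeeping in more detail than the paper's terse argument.
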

\begin{proof}
  Similarly to \lemref{ls.3-3},
  there are at most $\deg(x)+\deg(y)\le2\Delta$ candidates for $a'$. 
  The adjacency between $a$ and $a'$ can be checked in $O(\Delta)$ time.

  Note that the number of candidates for $z$ is 
  also at most $2\Delta$. 
  Each candidate of $z$ has at most $\Delta$ 1-tight neighbors
  that are the candidates of $b$. 
  Hence, for $b$, there are $O(\Delta^2)$ candidates. 

  For each candidate of $b$, 
  to check whether $(S\setminus D)\cup\{a,b\}$ is a solution,
  it suffices to check
  whether $\{a,b\}$ is a solution in the subgraph $G[F(D)]$,
  which can be done in $O(\Delta)$ time. 
  Then we have the time bound $O(\Delta^3)$. 
\end{proof}

{\bf (Proof of \thmref{ls.3})}
For every 3-tight vertex $a$,
check whether there is an improved solution
in the situations (i) and (ii). 
This can be done in $O(\Delta^2)$ time
from Lemmas \ref{lem:ls.3-1} and \ref{lem:ls.3-2}. 
Similarly, for every 2-tight vertex $a$,
check whether there is an improved solution
in the situations (iii) and (iv). 
This can be done in $O(\Delta^3)$ time
from Lemmas \ref{lem:ls.3-3} and \ref{lem:ls.3-4}. 

As there are $O(n)$ non-solution vertices,
we have the time bound $O(n\Delta^3)$. 
\hfill
\qed

\newpage
\subsection*{All Computational Results on DIMACS Graphs}

The next \tabref{dimacs.all} shows all results on 80 DIMACS graphs
that are downloadable from \cite{DIMACS}. 
The column ``CP12.8'' represents CPLEX12.8. 
We run CPLEX12.8 for each instance,
setting the time limit to 200 seconds. 
An initial solution is constructed by 
the maximum-degree greedy algorithm.
A hyphen in the rightmost four columns
indicates that the corresponding result
is not available in \cite{WCSY.2017,WLZY.2017}. 

As mentioned in the paper,
we happened to find a solution of the size 31 for C2000.9
and a solution of the size 15 for keller6 
by a preliminary version of ILPS. 
The vertices in the solution for C2000.9
have the following IDs:
\begin{quote}
  23, 78, 161, 252, 279, 344, 441, 556, 662, 671, 703, 769, 847, 864, 926, 952, 1056, 1266, 1274, 1475, 1540, 1619, 1636, 1641, 1646, 1673, 1826, 1839, 1915, 1947, 1979.
\end{quote}
The solution for keller6 is
the set of vertices with the following IDs:
\begin{quote}
  169, 601, 659, 855, 1020, 1215, 1352, 1586, 2052, 2376, 2463, 2818, 2847, 2944, 3281. 
\end{quote}
By the ID of a vertex,
we mean an integer that is given to the vertex
in the DIMACS files.

{\scriptsize
\begin{longtable}[c]{p{-5mm}l|rrrr|r|r|r|rrr}
  \caption{{\normalsize All results of the validation experiments on DIMACS graphs}}
  \label{tab:dimacs.all}
  \\
  \hline
	&		&	\multicolumn{5}{c|}{ILPS}									&		&	\multicolumn{1}{c|}{\cite{WCSY.2017}}	&	\multicolumn{3}{c}{\cite{WLZY.2017}}					\\
  \cline{3-7}																							
  \cline{9-12}																							
	&		&	\multicolumn{4}{c|}{($k=2$, $\nu=2^6$)}							&		&	\multicolumn{1}{c|}{CP}	&	\multicolumn{1}{c|}{MEM}	&	\multicolumn{1}{c}{GP}	&	\multicolumn{1}{c}{CP}	&	\multicolumn{1}{c}{LS}	\\
	&		&	Min	&	Avg	&	Max	&	TTB	&	Best	&	\multicolumn{1}{c|}{12.8}	&		&		&	\multicolumn{1}{c}{12.6}	&		\\
\hline
$\ast$	&	brock200\_1	&	8	&	8.0	&	8	&	$\varepsilon$	&	8	&	8	&	-	&	-	&	-	&	-	\\
$\ast$	&	brock200\_2	&	4	&	4.0	&	4	&	0.3	&	4	&	4	&	4	&	4	&	4	&	4	\\
$\ast$	&	brock200\_3	&	5	&	5.0	&	5	&	0.7	&	5	&	5	&	-	&	-	&	-	&	-	\\
$\ast$	&	brock200\_4	&	6	&	6.0	&	6	&	0.8	&	6	&	6	&	6	&	6	&	6	&	6	\\
	&	brock400\_1	&	10	&	10.0	&	10	&	1.1	&	10	&	10	&	-	&	-	&	-	&	-	\\
	&	brock400\_2	&	10	&	10.0	&	10	&	32.2	&	9	&	10	&	9	&	10	&	10	&	11	\\
	&	brock400\_3	&	9	&	9.3	&	10	&	168.0	&	9	&	10	&	-	&	-	&	-	&	-	\\
	&	brock400\_4	&	9	&	9.8	&	10	&	102.8	&	9	&	10	&	9	&	9	&	10	&	11	\\
	&	brock800\_1	&	8	&	8.3	&	9	&	33.5	&	8	&	9	&	-	&	-	&	-	&	-	\\
	&	brock800\_2	&	8	&	8.7	&	9	&	85.9	&	8	&	9	&	8	&	8	&	9	&	9	\\
	&	brock800\_3	&	8	&	8.4	&	9	&	81.2	&	8	&	10	&	-	&	-	&	-	&	-	\\
	&	brock800\_4	&	8	&	8.5	&	9	&	$\varepsilon$	&	8	&	9	&	8	&	8	&	9	&	9	\\
$\ast$	&	c-fat200-1	&	10	&	10.0	&	10	&	$\varepsilon$	&	10	&	10	&	-	&	-	&	-	&	-	\\
$\ast$	&	c-fat200-2	&	22	&	22.0	&	22	&	$\varepsilon$	&	22	&	22	&	-	&	-	&	-	&	-	\\
$\ast$	&	c-fat200-5	&	56	&	56.0	&	56	&	$\varepsilon$	&	56	&	56	&	-	&	-	&	-	&	-	\\
$\ast$	&	c-fat500-1	&	12	&	12.0	&	12	&	$\varepsilon$	&	12	&	12	&	-	&	-	&	-	&	-	\\
$\ast$	&	c-fat500-10	&	124	&	124.0	&	124	&	$\varepsilon$	&	124	&	124	&	-	&	-	&	-	&	-	\\
$\ast$	&	c-fat500-2	&	24	&	24.0	&	24	&	$\varepsilon$	&	24	&	24	&	-	&	-	&	-	&	-	\\
$\ast$	&	c-fat500-5	&	62	&	62.0	&	62	&	$\varepsilon$	&	62	&	62	&	-	&	-	&	-	&	-	\\
	&	C1000.9	&	27	&	27.8	&	29	&	$\varepsilon$	&	26	&	30	&	27	&	27	&	29	&	30	\\
$\ast$	&	C125.9	&	14	&	14.0	&	14	&	1.1	&	14	&	14	&	14	&	15	&	14	&	14	\\
	&	C2000.5	&	7	&	7.0	&	7	&	12.1	&	7	&	8	&	7	&	7	&	11	&	8	\\
	&	C2000.9	&	32	&	33.6	&	35	&	26.5	&	32	&	33	&	33	&	33	&	48	&	36	\\
	&	C250.9	&	17	&	17.0	&	17	&	49.7	&	17	&	17	&	17	&	17	&	18	&	18	\\
	&	C4000.5	&	7	&	7.9	&	8	&	92.3	&	7	&	9	&	8	&	8	&	-	&	-	\\
	&	C500.9	&	22	&	22.2	&	23	&	0.2	&	21	&	23	&	22	&	23	&	23	&	22	\\
	&	DSJC1000.5	&	6	&	6.0	&	6	&	1.3	&	6	&	7	&	6	&	6	&	6	&	6	\\
	&	DSJC500.5	&	5	&	5.0	&	5	&	1.6	&	5	&	5	&	5	&	5	&	10	&	7	\\
$\ast$	&	gen200\_p0.9\_44	&	16	&	16.0	&	16	&	5.4	&	16	&	16	&	16	&	16	&	16	&	16	\\
$\ast$	&	gen200\_p0.9\_55	&	16	&	16.0	&	16	&	39.4	&	16	&	16	&	16	&	16	&	16	&	16	\\
	&	gen400\_p0.9\_55	&	20	&	20.1	&	21	&	99.0	&	20	&	20	&	20	&	21	&	22	&	22	\\
	&	gen400\_p0.9\_65	&	20	&	20.7	&	21	&	95.3	&	20	&	22	&	20	&	21	&	21	&	22	\\
	&	gen400\_p0.9\_75	&	20	&	20.7	&	21	&	28.6	&	20	&	21	&	20	&	20	&	21	&	22	\\
	&	hamming10-2	&	128	&	131.1	&	133	&	1.2	&	128	&	161	&	-	&	-	&	-	&	-	\\
	&	hamming10-4	&	12	&	12.0	&	12	&	$\varepsilon$	&	12	&	14	&	12	&	12	&	14	&	12	\\
$\ast$	&	hamming6-2	&	12	&	12.0	&	12	&	$\varepsilon$	&	12	&	12	&	12	&	12	&	12	&	12	\\
$\ast$	&	hamming6-4	&	2	&	2.0	&	2	&	$\varepsilon$	&	2	&	2	&	2	&	2	&	2	&	2	\\
$\ast$	&	hamming8-2	&	36	&	36.0	&	36	&	$\varepsilon$	&	32	&	32	&	-	&	32	&	32	&	32	\\
$\ast$	&	hamming8-4	&	4	&	4.0	&	4	&	$\varepsilon$	&	4	&	4	&	4	&	4	&	4	&	4	\\
$\ast$	&	johnson16-2-4	&	8	&	8.0	&	8	&	$\varepsilon$	&	8	&	8	&	8	&	8	&	8	&	8	\\
$\ast$	&	johnson32-2-4	&	16	&	16.0	&	16	&	$\varepsilon$	&	16	&	16	&	16	&	16	&	16	&	16	\\
$\ast$	&	johnson8-2-4	&	4	&	4.0	&	4	&	$\varepsilon$	&	4	&	4	&	4	&	4	&	4	&	4	\\
$\ast$	&	johnson8-4-4	&	7	&	7.0	&	7	&	$\varepsilon$	&	7	&	7	&	7	&	7	&	7	&	7	\\
$\ast$	&	keller4	&	5	&	5.0	&	5	&	8.9	&	5	&	5	&	5	&	5	&	5	&	5	\\
	&	keller5	&	9	&	9.0	&	9	&	26.1	&	9	&	11	&	9	&	9	&	11	&	10	\\
	&	keller6	&	18	&	18.0	&	18	&	$\varepsilon$	&	16	&	20	&	18	&	18	&	32	&	19	\\
$\ast$	&	MANN\_a27	&	27	&	27.0	&	27	&	$\varepsilon$	&	27	&	27	&	27	&	27	&	27	&	27	\\
$\ast$	&	MANN\_a45	&	45	&	45.0	&	45	&	$\varepsilon$	&	45	&	45	&	45	&	45	&	45	&	45	\\
$\ast$	&	MANN\_a81	&	81	&	81.0	&	81	&	$\varepsilon$	&	81	&	81	&	81	&	81	&	81	&	81	\\
$\ast$	&	MANN\_a9	&	9	&	9.0	&	9	&	0.9	&	9	&	9	&	9	&	9	&	9	&	12	\\
$\ast$	&	p\_hat1000-1	&	3	&	3.0	&	3	&	54.8	&	3	&	3	&	-	&	-	&	-	&	-	\\
	&	p\_hat1000-2	&	6	&	6.1	&	7	&	21.5	&	6	&	7	&	-	&	-	&	-	&	-	\\
	&	p\_hat1000-3	&	11	&	12.0	&	13	&	$\varepsilon$	&	11	&	13	&	-	&	-	&	-	&	-	\\
	&	p\_hat1500-1	&	4	&	4.0	&	4	&	34.7	&	4	&	4	&	-	&	-	&	-	&	-	\\
	&	p\_hat1500-2	&	8	&	8.0	&	8	&	13.3	&	7	&	9	&	-	&	-	&	-	&	-	\\
	&	p\_hat1500-3	&	14	&	14.7	&	15	&	$\varepsilon$	&	14	&	18	&	-	&	-	&	-	&	-	\\
$\ast$	&	p\_hat300-1	&	3	&	3.0	&	3	&	$\varepsilon$	&	3	&	3	&	-	&	-	&	-	&	-	\\
$\ast$	&	p\_hat300-2	&	5	&	5.0	&	5	&	0.9	&	5	&	5	&	-	&	-	&	-	&	-	\\
$\ast$	&	p\_hat300-3	&	9	&	9.0	&	9	&	$\varepsilon$	&	9	&	9	&	-	&	-	&	-	&	-	\\
$\ast$	&	p\_hat500-1	&	3	&	3.0	&	3	&	0.3	&	3	&	3	&	-	&	-	&	-	&	-	\\
$\ast$	&	p\_hat500-2	&	6	&	6.0	&	6	&	28.4	&	6	&	6	&	-	&	-	&	-	&	-	\\
	&	p\_hat500-3	&	10	&	10.0	&	10	&	0.2	&	10	&	11	&	-	&	-	&	-	&	-	\\
$\ast$	&	p\_hat700-1	&	3	&	3.0	&	3	&	182.4	&	3	&	3	&	-	&	-	&	-	&	-	\\
	&	p\_hat700-2	&	6	&	6.9	&	7	&	52.0	&	6	&	6	&	-	&	-	&	-	&	-	\\
	&	p\_hat700-3	&	11	&	11.0	&	11	&	48.1	&	11	&	12	&	-	&	-	&	-	&	-	\\
	&	san1000	&	4	&	4.8	&	5	&	85.9	&	4	&	4	&	4	&	4	&	4	&	5	\\
$\ast$	&	san200\_0.7\_1	&	6	&	6.1	&	7	&	$\varepsilon$	&	6	&	6	&	6	&	7	&	6	&	7	\\
$\ast$	&	san200\_0.7\_2	&	6	&	6.0	&	6	&	16.7	&	6	&	6	&	6	&	6	&	6	&	6	\\
$\ast$	&	san200\_0.9\_1	&	15	&	15.0	&	15	&	2.3	&	15	&	15	&	15	&	16	&	15	&	16	\\
	&	san200\_0.9\_2	&	16	&	16.0	&	16	&	20.4	&	16	&	16	&	16	&	16	&	16	&	16	\\
	&	san200\_0.9\_3	&	15	&	15.2	&	16	&	0.3	&	15	&	17	&	15	&	15	&	15	&	15	\\
$\ast$	&	san400\_0.5\_1	&	4	&	4.0	&	4	&	64.9	&	4	&	4	&	4	&	4	&	4	&	4	\\
	&	san400\_0.7\_1	&	7	&	7.9	&	8	&	96.1	&	7	&	7	&	7	&	7	&	8	&	8	\\
	&	san400\_0.7\_2	&	7	&	7.7	&	8	&	106.6	&	7	&	7	&	7	&	7	&	7	&	8	\\
	&	san400\_0.7\_3	&	7	&	7.8	&	8	&	59.2	&	7	&	8	&	7	&	8	&	8	&	9	\\
	&	san400\_0.9\_1	&	20	&	20.4	&	21	&	$\varepsilon$	&	19	&	20	&	-	&	-	&	-	&	-	\\
$\ast$	&	sanr200\_0.7	&	7	&	7.0	&	7	&	7.0	&	7	&	7	&	-	&	-	&	-	&	-	\\
	&	sanr200\_0.9	&	16	&	16.0	&	16	&	$\varepsilon$	&	16	&	16	&	-	&	-	&	-	&	-	\\
	&	sanr400\_0.5	&	5	&	5.0	&	5	&	10.6	&	5	&	5	&	-	&	-	&	-	&	-	\\
	&	sanr400\_0.7	&	8	&	8.0	&	8	&	$\varepsilon$	&	8	&	9	&	-	&	-	&	-	&	-	\\
\hline
\end{longtable}
}

\end{document}